\def\norm #1{\left\|#1\right\|}
\def\inftyn #1{\left\|#1\right\|_{\infty}}
\def\twon #1{\left\|#1\right\|_2}
\def\onen #1{\left\|#1\right\|_1}
\def\frobn #1{\left\|#1\right\|_{\text{F}}}
\def\twoinfn #1{\left\|#1\right\|_{2,\infty}}
\def\atomn #1{\left\|#1\right\|_{\cA}}
\def\abs #1{\left|#1\right|}
\def\st{\text{subject to }}
\def\bC{\mathbb{C}}
\def\bR{\mathbb{R}}
\def\bS{\mathbb{S}}
\def\bT{\mathbb{T}}
\def\bE{\mathbb{E}}
\def\bP{\mathbb{P}}
\def\m #1{\boldsymbol{#1}}
\def\cA{\mathcal{A}}
\def\cE{\mathcal{E}}
\def\cK{\mathcal{K}}
\def\cP{\mathcal{P}}
\def\cT{\mathcal{T}}
\def\cT{\mathcal{T}}
\def\bee{\begin{equation}}
\def\ene{\end{equation}}
\def\beq{\begin{eqnarray}}
\def\enq{\end{eqnarray}}
\def\lentwo{\setlength\arraycolsep{2pt}}
\newtheorem{lem}{Lemma}
\newtheorem{thm}{Theorem}
\newtheorem{defi}{Definition}
\def\equ #1{\begin{equation}#1\end{equation}}
\def\equa #1{\begin{eqnarray}#1\end{eqnarray}}
\def\sbra #1{\left(#1\right)}
\def\mbra #1{\left[#1\right]}
\def\lbra #1{\left\{#1\right\}}
\def\diag #1{\text{diag}#1}
\def\tr #1{\text{tr}#1}
\def\rank #1{\text{rank}#1}
\def\spark #1{\text{spark}#1}
\def\st {\text{ subject to }}
\title{Exact Joint Sparse Frequency Recovery via Optimization Methods}
\author{Zai Yang, {\em Member, IEEE}, and Lihua Xie, {\em Fellow, IEEE}
\thanks{This work appeared in part in the {\em Proceedings of the 2014 IEEE Workshop on Statistical Signal Processing (SSP)}, Gold Coast, Australia, June 2014 \cite{yang2014continuous}. The research of the project was supported by Ministry of Education, Republic of Singapore, under grant AcRF TIER 1 RG78/15.

Z. Yang is with the School of Automation, Nanjing University of Science and Technology, Nanjing 210094, China, and also with the School of Electrical and Electronic Engineering, Nanyang Technological University, Singapore 639798 (e-mail: yangzai@ntu.edu.sg).

L. Xie is with the School of Electrical and Electronic Engineering, Nanyang Technological University, Singapore 639798 (e-mail: elhxie@ntu.edu.sg).}}
\begin{document}
\maketitle

%%%%%%%%%%%%%%%%%%%%%%%%%%%%%%%%%%%%%%%%%%%%%%%%%%%%%%%%%%%%%%%%%%%%%%%%%%%%%%%%

\begin{abstract} Frequency recovery/estimation from discrete samples of superimposed sinusoidal signals is a classic yet important problem in statistical signal processing. Its research has recently been advanced by atomic norm techniques which exploit signal sparsity, work directly on continuous frequencies, and completely resolve the grid mismatch problem of previous compressed sensing methods. In this work we investigate the frequency recovery problem in the presence of multiple measurement vectors (MMVs) which share the same frequency components, termed as joint sparse frequency recovery and arising naturally from array processing applications. To study the advantage of MMVs, we first propose an $\ell_{2,0}$ norm like approach by exploiting joint sparsity and show that the number of recoverable frequencies can be increased except in a trivial case. While the resulting optimization problem is shown to be rank minimization that cannot be practically solved, we then propose an MMV atomic norm approach that is a convex relaxation and can be viewed as a continuous counterpart of the $\ell_{2,1}$ norm method. We show that this MMV atomic norm approach can be solved by semidefinite programming. We also provide theoretical results showing that the frequencies can be exactly recovered under appropriate conditions. The above results either extend the MMV compressed sensing results from the discrete to the continuous setting or extend the recent super-resolution and continuous compressed sensing framework from the single to the multiple measurement vectors case. Extensive simulation results are provided to validate our theoretical findings and they also imply that the proposed MMV atomic norm approach can improve the performance in terms of reduced number of required measurements and/or relaxed frequency separation condition.

%Atomic $\ell_0$ norm and atomic norm methods, which are cast as rank minimization and semidefinite programming, respectively, are proposed and used to recover the frequencies by exploiting joint sparsity. Their theoretical guarantees for exact recovery are provided. These methods and theoretical results extend several existing results either from the single measurement vector (SMV) to the MMV case or from the discrete to the continuous frequency setting and provide useful tools for array processing. Moreover, the proposed methods can be interpreted as covariance-based by exploiting the structures of data covariance matrix including positive semidefinite, Toeplitz and low rank but relax several conditions required by their existing peers. Extensive numerical simulations are provided to validate our analyses and demonstrate the advantages of the proposed methods.
\end{abstract}

\begin{IEEEkeywords}
Atomic norm, compressed sensing, direction of arrival (DOA) estimation, joint sparse frequency recovery, multiple measurement vectors (MMVs).
\end{IEEEkeywords}

%\textbf{Keywords:} Array processing, atomic norm, compressed sensing, direction of arrival (DOA) estimation, frequency recovery/estimation, joint sparsity, low rank matrix completion, multiple measurement vector (MMV).

%\begin{keywords}
%Array processing, atomic norm, basis mismatch, compressed sensing, direction of arrival (DOA) estimation, frequency recovery/estimation, joint sparsity, multiple measurement vectors (MMVs).
%\end{keywords}

\section{Introduction}

Suppose that we observe uniform samples (with the Nyquist sampling rate) of a number of $L$ sinusoidal signals:
\equ{y_{jt}^o=\sum_{k=1}^K s_{kt}e^{i2\pi j f_k}, \quad \sbra{j,t}\in \m{J}\times\mbra{L}, \label{formu:observmodel1}}
which form an $N\times L$ matrix $\m{Y}^o=\mbra{y_{jt}^o}$, on the index set $\m{\Omega}\times\mbra{L}$, where $\m{\Omega} \subset \m{J} \coloneqq \lbra{0,1,\dots,N-1}$, $\mbra{L}\coloneqq \lbra{1,2,\dots,L}$, and $N$ is the number of uniform samples per sinusoidal signal. This means that each sinusoidal signal corresponds to one measurement vector.
Here $(j,t)$ indexes the entries of $\m{Y}^o$, $i=\sqrt{-1}$, $f_k\in\bT\coloneqq \left[0,1\right]$ denotes the $k$th normalized frequency (note that the starting point 0 and the ending point 1 of the unit circle $\bT$ are identical), $s_{kt}\in\bC$ is the (complex) amplitude of the $k$th frequency component composing the $t$th sinusoidal signal, and $K$ is the number of the components which is small but unknown. Moreover, let $M=\abs{\m{\Omega}}\leq N$ be the sample size of each measurement vector. The observed $M\times L$ data matrix $\m{Y}_{\m{\Omega}}^o\coloneqq\lbra{y_{jt}^o}_{\sbra{j,t}\in\m{\Omega}\times \mbra{L}}$ are referred to as full data, if $M=N$ (i.e., $\m{\Omega} = \m{J}$ and $\m{Y}_{\m{\Omega}}^o=\m{Y}^o$), and otherwise, compressive data. Let $\cT=\lbra{f_1,\dots,f_K}$ denote the set of the frequencies. The problem concerned in this paper is to recover $\cT$ given $\m{Y}_{\m{\Omega}}^o$, which is referred to as joint sparse frequency recovery (JSFR) in the sense that the multiple measurement vectors (MMVs) (i.e., the $L$ columns of $\m{Y}_{\m{\Omega}}^o$) share the same $K$ frequencies. Once $\cT$ is obtained, the amplitudes $\lbra{s_{kt}}$ and the full data $\m{Y}^o$ can be easily obtained by a simple least-squares method.

An application of the JSFR problem is direction of arrival (DOA) estimation in array processing \cite{krim1996two,stoica2005spectral}. In particular, suppose that $K$ farfield, narrowband sources impinge on a linear array of sensors and one wants to know their directions. The output of the sensor array can be modeled by (\ref{formu:observmodel1}) under appropriate conditions, where each frequency corresponds to one source's direction. The sampling index set $\m{\Omega}$ therein represents the geometry of the sensor array. To be specific, $\m{\Omega}=\m{J}$ in the full data case corresponds to an $N$-element uniform linear array (ULA) with adjacent sensors spaced by half a wavelength, while $\m{\Omega}\subsetneq\m{J}$ corresponds to a sparse linear array (SLA) that can be obtained by retaining only the sensors of the above ULA indexed by $\m{\Omega}$. Each measurement vector consists of the outputs of the sensor array at one snapshot. The $L$ MMVs are obtained by taking $L$ snapshots under the assumption of static sources (during a time window). Note that, since the array size can be limited in practice due to physical constraints and/or cost considerations, it is crucial in DOA estimation to exploit the temporal redundancy (a.k.a., the joint sparsity that we refer to) contained in the MMVs.

In conventional methods for JSFR one usually assumes that the source signals (or the rows of $\mbra{s_{kt}}$) have zero mean and are spatially uncorrelated. It follows that the covariance matrix of the full data snapshot (or the columns of $\m{Y}^o$) is positive semidefinite (PSD), Toeplitz and low rank (of rank $K$). Exploiting these structures for frequency recovery was firstly proposed by Pisarenko who rediscovered the classical Vandermonde decomposition lemma that states that the frequencies can be exactly retrieved from the data covariance matrix \cite{caratheodory1911zusammenhang,pisarenko1973retrieval}. A prominent class of methods was then proposed and designated as subspace methods such as MUSIC and ESPRIT \cite{schmidt1981signal,roy1989esprit}. While these methods estimate the data covariance using the sample covariance, the Toeplitz structure cannot be exploited in general, a sufficient number of snapshots is required, and their performance can be degraded in the presence of source correlations.

With the development of sparse signal representation and later the compressed sensing (CS) concept \cite{candes2006robust, donoho2006compressed}, sparse (for $L=1$) and joint sparse (for $L>1$) methods for frequency recovery have been popular in the past decade. In these methods, however, the frequencies of interest are usually assumed to lie on a fixed grid on $\bT$ because the development of CS so far has been focused on signals that can be sparsely represented under a finite discrete dictionary. Under the on-grid assumption, the observation model in (\ref{formu:observmodel1}) can be written into an underdetermined system of linear equations and CS methods are applied to solve an involved sparse signal whose support is finally identified as the frequency set $\cT$. Typical sparse methods include combinatorial optimization or $\ell_0$ (pseudo-)norm minimization, its convex relaxation or $\ell_1$ norm minimization, and greedy methods such as orthogonal matching pursuit (OMP) as well as their joint sparse counterparts \cite{donoho2003optimally,tropp2007signal,malioutov2005sparse,chen2006theoretical,hyder2010direction, eldar2010average}. While the $\ell_0$ minimization can exploit sparsity to the greatest extent possible, it is NP-hard and cannot be practically solved. The maximal $K$ allowed in $\ell_1$ minimization and OMP for guaranteed exact recovery is inversely proportional to a metric called coherence which, however, increases dramatically as the grid gets fine. Moveover, grid mismatches become a major problem of CS-based methods though several modifications have been proposed to alleviate this drawback (see, e.g., \cite{hu2012compressed,yang2012robustly, yang2013off,austin2013dynamic}).

Breakthroughs came out recently. In the single measurement vector (SMV) case when $L=1$, Cand\`{e}s and Fernandez-Granda \cite{candes2013towards} dealt directly with continuous frequencies and completely resolved the grid mismatch problem. In particular, they considered the full data case and showed that the frequencies can be exactly recovered by exploiting signal sparsity if all the frequencies are mutually separated by at least $\frac{4}{N}$. This means that up to $K=\frac{N}{4}$ frequencies can be recovered. Their method is based on the total variation norm or the atomic norm that extends the $\ell_1$ norm from the discrete to the continuous frequency case and can be computed using semidefinite programming (SDP) \cite{rudin1987real,chandrasekaran2012convex}. Following from \cite{candes2013towards}, Tang {\em et al.} \cite{tang2012compressed} studied the same problem in the case of compressive data using atomic norm minimization (ANM). Under the same frequency separation condition, they showed that a number of $M\geq O\sbra{K\log K\log N}$ randomly selected samples is sufficient to guarantee exact recovery with high probability. Several subsequent papers on this topic include \cite{bhaskar2013atomic,candes2013super,tang2013near,fang2014super,azais2015spike,yang2015gridless}. However, similar {\em gridless sparse} methods are rare for JSFR in the MMV case concerned in this paper. A gridless method designated as the sparse and parametric approach (SPA) was proposed in our previous work \cite{yang2014discretization} based on weighted covariance fitting by exploiting the structures of the data covariance matrix. In the main context of this paper we will show that this method is closely related to the MMV atomic norm method that we will introduce in the present paper. Another related work is \cite{tan2014direction}; however, in this paper the MMV problem was reformulated as an SMV one, with the joint sparsity missing, and solved within the framework in \cite{candes2013towards}. Therefore, the frequency recovery performance can be degraded. As an example, in the noiseless case the frequencies cannot be exactly recovered using the method in \cite{tan2014direction} due to some new `noise' term introduced.

In this paper, we first study the advantage of exploiting joint sparsity in the MMVs and then propose a practical approach to utilize this information. In particular, following from the literature on CS we propose an $\ell_0$ norm like sparse metric that is referred to as the MMV atomic $\ell_0$ norm and is a continuous counterpart of the $\ell_{2,0}$ norm used for joint sparse recovery \cite{chen2006theoretical}. We theoretically show that the MMVs can help improve the frequency recovery performance in terms of the number of recoverable frequencies except in a trivial case. But unfortunately (in fact, not surprisingly), this atomic $\ell_0$ norm approach is proven to be a rank minimization problem that cannot be practically solved. We then propose a convex relaxation approach in which the MMV atomic norm is adopted that is a continuous counterpart of the $\ell_{2,1}$ norm. We show that this atomic norm approach can be efficiently solved via semidefinite programming. Theoretical results are also provided to show that the frequencies can be exactly recovered under similar conditions as in \cite{candes2013towards, tang2012compressed}. Extensive simulation results are provided to validate our theoretical results and they also imply that the proposed MMV atomic norm approach can result in improved frequency recovery performance in terms of reduced number of required measurements and/or relaxed frequency separation condition.

It is interesting to note that the proposed MMV atomic $\ell_0$ norm and atomic norm approaches somehow exploit the structures of the ``data covariance matrix'' and are related to the aforementioned subspace methods. In particular, a PSD Toeplitz matrix is involved in both the proposed methods that can be interpreted as the data covariance matrix (as if certain statistical assumptions were satisfied) from the Vandermonde decomposition of which the true frequencies are finally obtained, while the low rank structure is exploited by matrix rank minimization in the atomic $\ell_0$ norm method and by matrix trace norm (or nuclear norm) minimization in the atomic norm method. As compared to the subspace methods, the proposed methods exploit the matrix structures to a greater extent. Moreover, the proposed methods do not require the assumption of uncorrelated sources and can be applied to the case of limited measurement vectors.

The results of this work were published online in the technical report \cite{yang2014exact1} and were presented in part in the conference paper \cite{yang2014continuous}. When preparing this paper we found that the same MMV atomic norm approach was also independently proposed in \cite{chi2014joint,li2014off}. This paper is different form \cite{chi2014joint,li2014off} in the following aspects. First, in this paper, the advantage of MMVs is theoretically proven in terms of the number of recoverable frequencies based on the proposed MMV atomic $\ell_0$ norm approach, while no such theoretical results are provided in \cite{chi2014joint,li2014off}. Second, in this paper, the SDP formulation of the MMV atomic norm is proven inspired by our previous work \cite{yang2014discretization}, while the proof in \cite{chi2014joint,li2014off} is given following \cite{tang2012compressed} on the SMV case. Finally, as pointed out in \cite{li2014off}, the theoretical guarantee of the MMV atomic norm approach provided in \cite[Theorem 2]{li2014off} is weaker than ours (see Theorem \ref{thm:incompletedata}; note that the technical report \cite{yang2014exact1} appeared online earlier than \cite{li2014off}).

Notations used in this paper are as follows. $\bR$ and $\bC$ denote the sets of real and complex numbers respectively. $\bT$ denotes the unit circle $\left[0,1\right]$ by identifying the starting and ending points. Boldface letters are reserved for vectors and matrices. For an integer $L$, $[L]\coloneqq\lbra{1,\cdots,L}$. $\abs{\cdot}$ denotes the amplitude of a scalar or the cardinality of a set. $\onen{\cdot}$, $\twon{\cdot}$ and $\frobn{\cdot}$ denote the
$\ell_1$, $\ell_2$ and Frobenius norms respectively. $\m{A}^T$ and $\m{A}^H$ are the matrix transpose and conjugate transpose of $\m{A}$ respectively. $x_j$ is the $j$th entry of a vector $\m{x}$, and $\m{A}_j$ denotes the $j$th row of a matrix $\m{A}$. Unless otherwise stated, $\m{x}_{\m{\Omega}}$ and $\m{A}_{\m{\Omega}}$ are subvector and submatrix of $\m{x}$ and $\m{A}$ respectively by retaining the entries of $\m{x}$ and the rows of $\m{A}$ indexed by the set $\m{\Omega}$. For a vector $\m{x}$, $\diag\sbra{\m{x}}$ is a diagonal matrix with $\m{x}$ on the diagonal. $\m{x}\succeq\m{0}$ means $x_j\geq0$ for all $j$. $\rank\sbra{\m{A}}$ denotes the rank of a matrix $\m{A}$ and $\tr\sbra{\m{A}}$ the trace. For positive semidefinite matrices $\m{A}$ and $\m{B}$, $\m{A}\geq\m{B}$ means that $\m{A}-\m{B}$ is positive semidefinite. $\bE\mbra{\cdot}$ denotes the expectation and $\bP\sbra{\cdot}$ the probability of an event.

The rest of the paper is organized as follows. Section \ref{sec:result} presents the main results of this paper. Section \ref{sec:connection} discusses connections between the proposed methods and prior art. Section \ref{sec:proof} presents proofs of the main results in Section \ref{sec:result}. Section \ref{sec:simulation} provides numerical simulations and Section \ref{sec:conclusion} concludes this paper.

\section{Main Results} \label{sec:result}

This section presents the main results of this paper whose proofs  will be given in Section \ref{sec:proof}.

\subsection{Preliminary: Vandermonde Decomposition}
The Vandermonde decomposition of Toeplitz matrices can date back to 1910s and has been important in the signal processing society since its rediscovery and use for frequency estimation in 1970s \cite{caratheodory1911zusammenhang,pisarenko1973retrieval} (see also \cite{stoica2005spectral}). In particular, it states that any PSD, rank-$K\leq N$, Toeplitz matrix $T\sbra{\m{u}}\in\bC^{N\times N}$, which is parameterized by $\m{u}\in\bC^N$ and given by
\equ{T\sbra{\m{u}}=\begin{bmatrix}u_1 & u_2 & \cdots & u_N\\ {u}_2^H & u_1 & \cdots & u_{N-1}\\ \vdots & \vdots & \ddots & \vdots \\ {u}_N^H & {u}_{N-1}^H & \cdots & u_1\end{bmatrix}, }
can be decomposed as
\equ{T\sbra{\m{u}}=\sum_{k=1}^K p_k \m{a}\sbra{f_k}\m{a}^H\sbra{f_k} = \m{A}\sbra{\m{f}}\m{P} \m{A}^H\sbra{\m{f}},\label{formu:VD}}
where $\m{A}\sbra{\m{f}}=\mbra{\m{a}\sbra{f_1},\dots,\m{a}\sbra{f_K}}\in\bC^{N\times K}$ with $\m{a}\sbra{f}=\mbra{1,e^{i2\pi f},\dots,e^{i2\pi\sbra{N-1}f}}^T\in\bC^N$, $\m{P}=\diag\sbra{p_1,\dots,p_K}$ with $p_k>0$, $k=1,\dots,K$ and $\lbra{f_k}$ are distinct points in $\bT$. Moreover, the decomposition in \eqref{formu:VD} is unique if $K<N$. Note that the name `Vandermonde' comes from the fact that $\m{A}\sbra{\m{f}}$ is a Vandermonde matrix.

It is well known that under the assumption of uncorrelated sources the data covariance matrix (i.e., the covariance matrix of each column of $\m{Y}^o$) is a rank-$K$, PSD, Toeplitz matrix. Therefore, the Vandermonde decomposition actually says that the frequencies can be uniquely obtained from the data covariance matrix given $K<N$ \cite{pisarenko1973retrieval}. Note that a subspace method such as ESPRIT can be used to compute the decomposition in \eqref{formu:VD}.

\subsection{Frequency Recovery Using Joint Sparsity} \label{sec:atomell0min}

To exploit the joint sparsity in the MMVs, we let $\m{s}_{k}=\mbra{s_{k1},\cdots,s_{kL}}\in\bC^{1\times L}$. It follows that (\ref{formu:observmodel1}) can be written as
\equ{\m{Y}^o=\sum_{k=1}^K \m{a}\sbra{f_k}\m{s}_{k}=\sum_{k=1}^K c_k\m{a}\sbra{f_k}\m{\phi}_{k}, \label{formu:observmodel}}
where $\m{a}\sbra{f}$ is as defined in \eqref{formu:VD}, $c_k=\twon{\m{s}_{k}}>0$ and $\m{\phi}_{k}=c_k^{-1}\m{s}_{k}$ with $\twon{\m{\phi}_{k}}=1$. Let $\bS^{2L-1}=\lbra{ \m{\phi}\in\bC^{1\times L}:\; \twon{\m{\phi}}=1}$ denote the unit complex $\sbra{L-1}$-sphere (or real $\sbra{2L-1}$-sphere). Define the set of atoms
\equ{\cA\coloneqq \lbra{\m{a}\sbra{f,\m{\phi}}=\m{a}\sbra{f}\m{\phi}: f\in\bT, \m{\phi}\in\bS^{2L-1}}.\label{formu:atomset}}
It follows from (\ref{formu:observmodel}) that $\m{Y}^o$ is a linear combination of $K$ atoms in $\cA$. In particular, we say that a decomposition of $\m{Y}^o$ as in (\ref{formu:observmodel}) is an atomic decomposition of order $K$ if $c_k>0$ and the frequencies $f_k$ are distinct.

Following from the literature on CS, we first propose an (MMV) atomic $\ell_0$ norm approach to signal and frequency recovery that exploits sparsity to the greatest extent possible. In particular, the atomic $\ell_0$ norm of $\m{Y}\in\bC^{N\times L}$ is defined as the smallest number of atoms in $\cA$ that can express $\m{Y}$:
\equ{\norm{\m{Y}}_{\cA,0}
=\inf\lbra{\cK: \m{Y}=\sum_{k=1}^{\cK} c_k\m{a}_k, \m{a}_k\in\cA, c_k>0}. \label{formu:AL0}}
The following optimization method is proposed for signal recovery that generalizes a method in \cite{tang2012compressed} from the SMV to the MMV case:
\equ{\min_{\m{Y}} \norm{\m{Y}}_{\cA,0}, \st \m{Y}_{\m{\Omega}}=\m{Y}^o_{\m{\Omega}}. \label{formu:AL0min}}
The frequencies composing the solution of $\m{Y}$ are the frequency estimates.

To show the advantage of MMVs, we define the continuous dictionary
\equ{\begin{split}\cA_{\m{\Omega}}^1
&\coloneqq \lbra{\m{a}_{\m{\Omega}}\sbra{f}: f\in\bT} \end{split}}
and then define the spark of $\cA_{\m{\Omega}}^1$, denoted by $\spark\sbra{\cA_{\m{\Omega}}^1}$, as the smallest number of atoms in $\cA_{\m{\Omega}}^1$ that are linearly dependent. Note that this definition of spark generalizes that in \cite{kruskal1977three} from the discrete to the continuous dictionary case. We have the following theoretical guarantee for (\ref{formu:AL0min}).

\begin{thm} $\m{Y}^o=\sum_{j=1}^K c_j\m{a}\sbra{f_j,\m{\phi}_j}$ is the unique optimizer to \eqref{formu:AL0min} if
\equ{K< \frac{\spark\sbra{\cA_{\m{\Omega}}^1}-1+\rank \sbra{\m{Y}_{\m{\Omega}}^o}}{2}. \label{Kbound}}
Moreover, the atomic decomposition above is the unique one satisfying that $K=\norm{\m{Y}^o}_{\cA,0}$.
\label{thm:AL0_guanrantee}
\end{thm}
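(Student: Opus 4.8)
The plan is to adapt the classical multiple-measurement-vector uniqueness argument of \cite[Theorem 2.4]{chen2006theoretical} to the continuous dictionary $\cA_{\m{\Omega}}^1$, the only structural ingredient being its spark together with the fact that each decomposition involves atoms $\m{a}_{\m{\Omega}}\sbra{\cdot}$ at \emph{distinct} frequencies. First I would record a consequence of (\ref{Kbound}): since $\m{Y}^o_{\m{\Omega}}=\sum_{j=1}^K \m{a}_{\m{\Omega}}\sbra{f_j}\m{s}_j$ we have $\rank\sbra{\m{Y}^o_{\m{\Omega}}}\leq K$, so (\ref{Kbound}) forces $K<\spark\sbra{\cA_{\m{\Omega}}^1}-1$. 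Hence any collection of at most $K$ atoms of $\cA_{\m{\Omega}}^1$ at distinct frequencies is linearly independent, a fact used repeatedly below.

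Next I would bring in an arbitrary competitor. Since $\m{Y}^o$ is feasible and uses at most $K$ atoms, any optimizer $\m{Y}$ satisfies $K'\triangleq\norm{\m{Y}}_{\cA,0}\leq K$; write $\m{Y}=\sum_{l=1}^{K'} c'_l\m{a}\sbra{g_l,\m{\psi}_l}$ with distinct $g_l$. Restricting both decompositions to $\m{\Omega}$ and using $\m{Y}_{\m{\Omega}}=\m{Y}^o_{\m{\Omega}}$ gives $\m{Y}^o_{\m{\Omega}}=\m{B}_{S_1}\m{Z}_1=\m{B}_{S_2}\m{Z}_2$, where $S_1=\lbra{f_j}$, $S_2=\lbra{g_l}$, $\m{B}_{S}$ collects the atoms $\m{a}_{\m{\Omega}}\sbra{\cdot}$ indexed by $S$, and $\m{Z}_1,\m{Z}_2$ are the (fully row-supported) coefficient matrices. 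The goal becomes to prove $\m{Z}_1=\m{Z}_2$ (after zero-padding both to $\Lambda\triangleq S_1\cup S_2$): this forces $S_1=S_2$ with matching coefficients, hence $\m{Y}=\m{Y}^o$, and, applied to a minimal decomposition of $\m{Y}^o$ itself, it also yields $\norm{\m{Y}^o}_{\cA,0}=K$ together with uniqueness of the decomposition.

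The heart of the argument — the step that genuinely exploits $\rank\sbra{\m{Y}^o_{\m{\Omega}}}$ rather than only the naive $2K<\spark\sbra{\cA_{\m{\Omega}}^1}$ threshold — is a column-space dimension count. Because $\abs{S_1},\abs{S_2}\leq K<\spark\sbra{\cA_{\m{\Omega}}^1}$, both $\m{B}_{S_1}$ and $\m{B}_{S_2}$ have full column rank, so $\dim\text{col}\sbra{\m{B}_{S_i}}=\abs{S_i}$, while $\text{col}\sbra{\m{Y}^o_{\m{\Omega}}}\subseteq\text{col}\sbra{\m{B}_{S_1}}\cap\text{col}\sbra{\m{B}_{S_2}}$. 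The subspace inclusion-exclusion identity, together with $\text{col}\sbra{\m{B}_{S_1}}+\text{col}\sbra{\m{B}_{S_2}}=\text{col}\sbra{\m{B}_{\Lambda}}$, then gives $r\triangleq\rank\sbra{\m{Y}^o_{\m{\Omega}}}\leq\abs{S_1}+\abs{S_2}-\rank\sbra{\m{B}_{\Lambda}}$, i.e. $\rank\sbra{\m{B}_{\Lambda}}\leq 2K-r$.

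Finally I would close with a dichotomy on $\abs{\Lambda}$. If $\abs{\Lambda}\geq\spark\sbra{\cA_{\m{\Omega}}^1}$, then any $\spark\sbra{\cA_{\m{\Omega}}^1}-1$ columns of $\m{B}_{\Lambda}$ are independent, so $\rank\sbra{\m{B}_{\Lambda}}\geq\spark\sbra{\cA_{\m{\Omega}}^1}-1$, contradicting $\rank\sbra{\m{B}_{\Lambda}}\leq 2K-r<\spark\sbra{\cA_{\m{\Omega}}^1}-1$, where the last inequality is exactly (\ref{Kbound}). Hence $\abs{\Lambda}<\spark\sbra{\cA_{\m{\Omega}}^1}$, so $\m{B}_{\Lambda}$ is injective on its columns; since $\m{B}_{\Lambda}\sbra{\m{Z}_1-\m{Z}_2}=\m{Y}^o_{\m{\Omega}}-\m{Y}^o_{\m{\Omega}}=\m{0}$, this forces $\m{Z}_1=\m{Z}_2$, completing the proof. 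I expect the main obstacle to be isolating precisely this dimension-count step — recognizing that the right quantity to contradict is a lower bound on $\rank\sbra{\m{B}_{\Lambda}}$, and that $r$ enters through the common column space of the two atom sub-dictionaries — rather than attempting a per-column spark estimate, which only recovers the weaker $2K<\spark\sbra{\cA_{\m{\Omega}}^1}$ guarantee.
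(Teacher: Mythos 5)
Your proof is correct and follows essentially the same route as the paper's: both adapt the Chen--Huo MMV spark argument, reducing everything to a counting inequality on the concatenated atom matrix $\m{B}_{\Lambda}$ (the paper's $\m{A}_{\m{\Omega}}$) that pits $\spark\sbra{\cA_{\m{\Omega}}^1}$ against $2K-\rank\sbra{\m{Y}^o_{\m{\Omega}}}$. The only difference is packaging: your bound $\rank\sbra{\m{B}_{\Lambda}}\leq \abs{S_1}+\abs{S_2}-r$ via the intersection of column spaces is, by rank--nullity and $\abs{\Lambda}=\abs{S_1}+\abs{S_2}-K_{12}$, exactly the paper's bound $\text{nullity}\sbra{\m{A}_{\m{\Omega}}}\geq r-K_{12}$, which the paper instead extracts from a chain of rank inequalities on the coefficient blocks.
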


By Theorem \ref{thm:AL0_guanrantee} the frequencies can be exactly recovered using the atomic $\ell_0$ norm approach if the sparsity $K$ is sufficiently small with respect to the sampling index set $\m{\Omega}$ and the observed data $\m{Y}_{\m{\Omega}}^o$. Note that the number of recoverable frequencies can be increased, as compared to the SMV case, if $\rank \sbra{\m{Y}_{\m{\Omega}}^o}>1$, which happens except in a trivial case when the MMVs in $\m{Y}_{\m{\Omega}}^o$ are identical up to scaling factors.

But unfortunately, the following result shows that $\norm{\m{Y}}_{\cA,0}$ is substantially a rank minimization problem that cannot be practically solved.

\begin{thm} $\norm{\m{Y}}_{\cA,0}$ defined in (\ref{formu:AL0}) equals the optimal value of the following rank minimization problem:
\equ{\min_{\m{W},\m{u}} \rank\sbra{T\sbra{\m{u}}}, \st \begin{bmatrix}\m{W} & \m{Y}^H \\ \m{Y} & T\sbra{\m{u}}\end{bmatrix} \geq\m{0}. \label{formu:AL0_rankmin}}
\label{thm:AL0_rankmin}
\end{thm}

It immediately follows from \eqref{formu:AL0_rankmin} that \eqref{formu:AL0min} can be cast as the following rank minimization problem:
\equ{\begin{split}
&\min_{\m{Y},\m{W},\m{u}} \rank\sbra{T\sbra{\m{u}}},\\
&\st \begin{bmatrix}\m{W} & \m{Y}^H \\ \m{Y} & T\sbra{\m{u}}\end{bmatrix}\geq\m{0} \text{ and } \m{Y}_{\m{\Omega}}=\m{Y}^o_{\m{\Omega}}. \end{split} \label{formu:AL0_rankmin1}}
Suppose that \eqref{formu:AL0_rankmin1} can be globally solved and let $\m{u}^*$ and $\m{Y}^*$ denote the solutions of $\m{u}$ and $\m{Y}$, respectively. If the condition of Theorem \ref{thm:AL0_guanrantee} is satisfied, then $\m{Y}^o=\m{Y}^*$ and the frequencies as well as the atomic decomposition of $\m{Y}^o$ in Theorem \ref{thm:AL0_guanrantee} can be computed accordingly. In particular, it is guaranteed that $\rank\sbra{T\sbra{\m{u}^*}}=K<N$ (see the proof in Section \ref{sec:proofAL0}). It follows that the true frequencies can be uniquely obtained from the Vandermonde decomposition of $T\sbra{\m{u}^*}$. After that, the atomic decomposition of $\m{Y}^o$ can be obtained by the fact that $\m{Y}^*$ lies in the range space of $T\sbra{\m{u}^*}$. Moreover, it is worth noting that, although \eqref{formu:AL0min} has a trivial solution in the full data case, the problem in \eqref{formu:AL0_rankmin1} still makes sense and the frequency retrieval process also applies.

%In the full data case where $\m{\Omega}=\mbra{N}$, it is easy to show that $\spark\sbra{\cA_{\m{\Omega}}^1}=N+1$ using the fact that any $N$ atoms in $\cA_{\m{\Omega}}^1$ are linearly independent. As a result, the condition in \eqref{Kbound} becomes
%\equ{K< \frac{N+\rank \sbra{\m{Y}^o}}{2}. \label{Kbound_comp}}
%Moreover, although \eqref{formu:AL0min} has a trivial solution in the full data case, the rank minimization formulation in \eqref{formu:AL0_rankmin1} still makes sense and the frequency retrieval process also applies.

\subsection{Frequency Recovery via Convex Relaxation} \label{sec:convexrelaxation}
%\subsection{Convex Relaxation and Semidefinite Formulation} \label{sec:convexrelax}

While the rank minimization problem in \eqref{formu:AL0_rankmin1} is nonconvex and cannot be globally solved with a practical algorithm, it motivates the (MMV) atomic norm method---a convex relaxation. In particular, the atomic norm of $\m{Y}\in\bC^{N\times L}$ is defined as the gauge function of $\text{conv}\sbra{\cA}$, the convex hull of $\cA$ \cite{chandrasekaran2012convex}:
\equ{\begin{split}\atomn{\m{Y}}
&\coloneqq\inf\lbra{t>0: \m{Y}\in t\text{conv}\sbra{\cA}} \\
&= \inf\lbra{\sum_k c_k: \m{Y}=\sum_k c_k\m{a}_k, c_k>0, \m{a}_k\in\cA},\end{split} \label{formu:atomicnorm}}
in which the joint sparsity is exploited in a different manner.
Indeed, $\atomn{\cdot}$ is a norm by the property of the gauge function and thus it is convex. Corresponding to \eqref{formu:AL0min}, we propose the following convex optimization problem:
\equ{\min_{\m{Y}} \norm{\m{Y}}_{\cA}, \st \m{Y}_{\m{\Omega}}=\m{Y}^o_{\m{\Omega}}. \label{formu:ANmin}}

Though we know that \eqref{formu:ANmin} is convex, \eqref{formu:ANmin} still cannot be practically solved since by \eqref{formu:atomicnorm} it is a semi-infinite program with an infinite number of variables. To practically solve \eqref{formu:ANmin}, an SDP formulation of $\norm{\m{Y}}_{\cA}$ is provided in the following theorem.

\begin{thm} $\norm{\m{Y}}_{\cA}$ defined in (\ref{formu:atomicnorm}) equals the optimal value of the following SDP:
\equ{\begin{split}
&\min_{\m{W},\m{u}} \frac{1}{2\sqrt{N}}\mbra{\tr\sbra{\m{W}} + \tr\sbra{T\sbra{\m{u}}}}, \\
&\st \begin{bmatrix}\m{W} & \m{Y}^H \\ \m{Y} & T\sbra{\m{u}}\end{bmatrix}\geq\m{0}.\end{split} \label{formu:AN_SDP}} \label{thm:AN_SDP}
\end{thm}

By Theorem \ref{thm:AN_SDP}, \eqref{formu:ANmin} can be cast as the following SDP which can be solved using an off-the-shelf SDP solver:
\equ{\begin{split}
&\min_{\m{Y},\m{W},\m{u}} \tr\sbra{\m{W}} + \tr\sbra{T\sbra{\m{u}}},\\
&\st \begin{bmatrix}\m{W} & \m{Y}^H \\ \m{Y} & T\sbra{\m{u}}\end{bmatrix}\geq\m{0} \text{ and } \m{Y}_{\m{\Omega}}=\m{Y}^o_{\m{\Omega}}. \end{split} \label{formu:AN_sdp}}
Given the optimal solution $\m{u}^*$ to \eqref{formu:AN_sdp}, the frequencies and the atomic decomposition of $\m{Y}^o$ can be computed as previously based on the Vandermonde decomposition of $T\sbra{\m{u}^*}$.

Finally, we analyze the theoretical performance of the atomic norm approach. To do so, we define the minimum separation of a finite subset $\cT\subset\bT$ as the closest wrap-around distance between any two elements,
\equ{\Delta_{\cT}=\inf_{a,b\in\cT:a\neq b}\min\lbra{\abs{a-b}, 1-\abs{a-b}}. \notag}
We first study the full data case that, as we will see, forms the basis of the compressive data case. Note that \eqref{formu:AN_sdp} can be solved for frequency recovery though \eqref{formu:ANmin} admits a trivial solution. We have the following theoretical guarantee.

\begin{thm} $\m{Y}^o=\sum_{j=1}^K c_j\m{a}\sbra{f_j,\m{\phi}_j}$ is the unique atomic decomposition satisfying that $\norm{\m{Y}^o}_{\cA}=\sum_{j=1}^K c_j$ if $\Delta_{\cT}\geq \frac{1}{\lfloor(N-1)/4\rfloor}$ and $N\geq257$.\footnote{The condition $N\geq257$ is more like a technical requirement but not an obstacle in practice (see numerical simulations in Section \ref{sec:simulation}).} \label{thm:completedata}
\end{thm}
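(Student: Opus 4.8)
The plan is to establish exact recovery by constructing a dual certificate, following the dual-polynomial strategy of Cand\`es and Fernandez-Granda but now adapted to the MMV/vector-valued setting. By the discussion in Section \ref{sec:freqretriev}, recovering the frequencies is equivalent to showing that the given decomposition $\m{Y}^o=\sum_{j=1}^K c_j\m{a}\sbra{f_j,\m{\phi}_j}$ is the \emph{unique} one achieving $\atomn{\m{Y}^o}=\sum_j c_j$. Standard convex-duality reasoning (the complete-data analogue of (\ref{formu:dual1}) with $\m{\Omega}=\m{J}$) tells us that this holds provided we can exhibit a dual matrix $\m{V}\in\bC^{N\times L}$ satisfying the interpolation conditions
\equ{\m{a}\sbra{f_j}^H\m{V}=c_j^{-1}\m{s}_j=\m{\phi}_j,\quad j\in\mbra{K}, \qquad \twon{\m{a}\sbra{f}^H\m{V}}<1\ \ \forall\, f\notin\cT, \notag}
so that $\datomn{\m{V}}\le 1$ and $\inp{\m{V},\m{Y}^o}_{\bR}=\sum_j c_j$. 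The first step, therefore, is to reduce the theorem to the existence of such a certificate, and to recognize (via (\ref{formu:dualAN})) that the row vector $\m{q}\sbra{f}^H\triangleq\m{a}\sbra{f}^H\m{V}$ is a vector-valued trigonometric polynomial of degree $N-1$ that must interpolate the sign vectors $\m{\phi}_j$ at the nodes $f_j$ and have squared-norm $\twon{\m{q}\sbra{f}}^2$ strictly below $1$ off the support.

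The second step is the explicit construction. I would build each component of $\m{q}\sbra{f}$ from the same squared Fej\'er-type interpolation kernel $K(f)$ used in \cite{candes2013towards}, setting $\m{q}\sbra{f}=\sum_{j=1}^K \m{\phi}_j\, \alpha_j K(f-f_j)+\sum_{j=1}^K \m{\beta}_j\, K'(f-f_j)$, where the scalar/vector coefficients $\alpha_j\in\bC$ and $\m{\beta}_j\in\bC^{1\times L}$ are chosen to enforce the interpolation $\m{q}\sbra{f_j}=\m{\phi}_j$ together with the stationarity condition $\m{q}'\sbra{f_j}=\m{0}$ (the latter guarantees that $\twon{\m{q}\sbra{f}}^2$ attains a strict local maximum of $1$ at each node). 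This is the natural vector generalization: the same linear system that appears in the SMV case now acts simultaneously on each of the $L$ coordinates, with the right-hand sides assembled from the unit vectors $\m{\phi}_j$.

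The third step is to show the coefficient system is solvable and well-conditioned. Here the key observation is that because $\twon{\m{\phi}_j}=1$, one has $\twon{\m{q}\sbra{f_j}}=1$ automatically once interpolation holds, so the delicate part is not the values at the nodes but the bound $\twon{\m{q}\sbra{f}}<1$ strictly between them. I expect this to reduce, after expanding $\twon{\m{q}\sbra{f}}^2=\sum_{\ell=1}^L \abs{q_\ell\sbra{f}}^2$, to a sum of $L$ nonnegative scalar quantities each controlled exactly as in \cite{candes2013towards}, together with cross terms that must be handled by Cauchy--Schwarz and the near-orthogonality of the kernel bumps. The separation hypothesis $\Delta_{\cT}\ge 1/\lfloor(N-1)/4\rfloor$ and the requirement $N\ge 257$ enter precisely to make the off-diagonal interpolation matrix a small perturbation of the identity, so that the coefficients $\alpha_j,\m{\beta}_j$ stay close to $(\m{\phi}_j,\m{0})$ and the kernel tail bounds force strict sub-unit modulus away from $\cT$.

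The main obstacle, and the step I would spend the most care on, is the uniform bound $\twon{\m{q}\sbra{f}}^2<1$ for all $f\notin\cT$ in the \emph{vector} setting. Unlike the SMV case, where one controls a single $\abs{q(f)}$, here the worst case over the unit sphere of sign patterns $\m{\phi}_j$ could in principle align the $L$ coordinate polynomials adversarially. The resolution I anticipate is that the construction makes $\twon{\m{q}\sbra{f}}^2$ bounded by the \emph{same} scalar majorant $\abs{\widetilde q\sbra{f}}^2$ that arises in \cite{candes2013towards} regardless of the particular unit vectors $\m{\phi}_j$ — essentially because $\twon{\m{\phi}_j}=1$ caps each contribution — so the worst-case MMV bound collapses to the already-established SMV bound. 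Verifying this majorization rigorously, partitioning $\bT$ into a neighborhood of the nodes (where a second-order Taylor/Hessian argument gives strict concavity of $\twon{\m{q}}^2$) and the complementary far region (where kernel decay suffices), is where the quantitative constants $4$ and $257$ are pinned down, and it is the crux that transfers the deterministic SMV guarantee to the joint-sparse case.
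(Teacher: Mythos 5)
Your proposal follows essentially the same route as the paper: reduce to a vector-valued dual certificate interpolating the $\m{\phi}_j$ with $\twon{Q(f)}<1$ off $\cT$, construct $Q$ from the squared Fej\'er kernel and its derivative with coefficients fixed by interpolation plus stationarity, and argue that the resulting bounds collapse to the scalar SMV majorant of Cand\`es and Fernandez-Granda on both the near and far regions. Two cautions, one of which is a step that would fail as written. First, your ansatz $\sum_j\m{\phi}_j\,\alpha_j\,\cK(f-f_j)+\sum_j\m{\beta}_j\,\cK'(f-f_j)$ with \emph{scalar} $\alpha_j$ has only $K(L+1)$ complex unknowns against the $2KL$ equations imposed by $Q(f_j)=\m{\phi}_j$ and $Q'(f_j)=\m{0}$, so for $L\geq 2$ the system is overdetermined and generically unsolvable; the coefficient $\m{\alpha}_j$ must be a full row vector in $\bC^{1\times L}$ (it comes out close to, but not parallel to, $\m{\phi}_j$). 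Your own subsequent description --- the SMV linear system applied columnwise with right-hand side $\m{\Phi}=\mbra{\m{\phi}_1^T,\dots,\m{\phi}_K^T}^T$ --- is the correct setup and is exactly what the paper does. Second, the ``majorization'' you rightly identify as the crux is delivered in the paper by one specific lemma: the operator norm induced by the $\ell_{2,\infty}$ matrix norm equals the induced $\ell_\infty$ norm, so the SMV bounds on $\inftyn{\m{I}-\m{D}_3^{-1}}$, $\inftyn{\m{D}_1}$ and $\inftyn{\abs{\cK''(0)}\m{I}-\m{D}_2}$ transfer verbatim to $\twoinfn{\m{\alpha}-\m{\Phi}}\leq 8.824\times10^{-3}$ and $\twoinfn{\m{\beta}}\leq 1.647\times10^{-2}/n$; after that, $\twon{Q(f)}$ on $\cT_{\text{far}}$ and the concavity of $\twon{Q(f)}^2$ on $\cT_{\text{near}}$ follow from the triangle and Cauchy--Schwarz inequalities with the same numerical constants as in the SMV case, so no adversarial alignment of the $L$ coordinates can occur. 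A last technical step you omit is the reduction from $\m{J}=\lbra{0,\dots,N-1}$ to the symmetric index set $\lbra{-2n,\dots,2n}$ with $n=\lfloor(N-1)/4\rfloor\geq 64$ (whence $N\geq 257$), the general certificate being the symmetric one modulated by $e^{-i2\pi(2n)f}$.
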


In the compressive data case, the following result holds.

\begin{thm} Suppose we observe $\m{Y}^o=\sum_{j=1}^K c_j\m{a}\sbra{f_j,\m{\phi}_j}$
on the index set $\m{\Omega}\times\mbra{L}$, where $\m{\Omega}\subset\m{J}$ is of size $M$ and selected uniformly at random. Assume that $\lbra{\m{\phi}_j}_{j=1}^K\subset \bS^{2L-1}$ are  independent random variables with $\bE\m{\phi}_j=\m{0}$. If $\Delta_{\cT}\geq \frac{1}{\lfloor(N-1)/4\rfloor}$, then there exists a numerical constant $C$ such that
\equ{M\geq C\max\lbra{\log^2\frac{\sqrt{L}N}{\delta}, K\log\frac{K}{\delta}\log\frac{\sqrt{L}N}{\delta}} \label{formu:AN_bound}}
is sufficient to guarantee that, with probability at least $1-\delta$, $\m{Y}^o$ is the unique optimizer to (\ref{formu:ANmin}) and $\m{Y}^o=\sum_{j=1}^K c_j\m{a}\sbra{f_j,\m{\phi}_j}$ is the unique atomic decomposition satisfying that $\norm{\m{Y}^o}_{\cA}=\sum_{j=1}^K c_j$. \label{thm:incompletedata}
\end{thm}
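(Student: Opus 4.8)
The plan is to adapt the dual-certificate method of Tang \emph{et al.} \cite{tang2012compressed} from the SMV to the vector-valued MMV setting. First I would reduce exact recovery to the construction of a dual certificate. Applying Lagrangian duality to (\ref{formu:ANmin}) together with the dual-norm characterization (\ref{formu:dualAN}), it suffices to exhibit a matrix $\m{V}\in\bC^{N\times L}$ with $\m{V}_{\overline{\m{\Omega}}}=\m{0}$ whose associated vector-valued dual polynomial $\m{Q}(f)=\m{a}(f)^H\m{V}\in\bC^{1\times L}$ satisfies the interpolation conditions $\m{Q}(f_j)=\m{\phi}_j$ for all $j\in\mbra{K}$ together with the strict bound $\twon{\m{Q}(f)}<1$ for every $f\in\bT\backslash\cT$. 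The support constraint $\m{V}_{\overline{\m{\Omega}}}=\m{0}$ encodes that only the sampled rows are available and forces $\m{Q}$ to be assembled from the random subset of complex exponentials indexed by $\m{\Omega}$; the equality $\m{Q}(f_j)=\m{\phi}_j$ will follow from complementary slackness, while the strict inequality off $\cT$ will deliver both the optimality and the uniqueness claims.

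Second, I would build $\m{Q}$ by interpolation using a \emph{random} kernel. Let $\bar K(f)$ be a weighted sum of $e^{i2\pi m f}$ over $m\in\m{\Omega}$, with weights chosen so that its expectation over the random draw of $\m{\Omega}$ equals the squared Fej\'er kernel underlying the complete-data proof of Theorem \ref{thm:completedata}. I would then look for $\m{Q}(f)=\sum_{k=1}^K \sbra{\bar K(f-f_k)\m{\alpha}_k + \bar K'(f-f_k)\m{\beta}_k}$ with unknown coefficient rows $\m{\alpha}_k,\m{\beta}_k\in\bC^{1\times L}$, and impose $\m{Q}(f_j)=\m{\phi}_j$ and $\m{Q}'(f_j)=\m{0}$ for all $j$. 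The derivative conditions force each $f_j$ to be a strict local maximum of $\twon{\m{Q}}$, which is what is needed to push $\twon{\m{Q}}$ strictly below $1$ in a neighborhood of the support. These $2K$ block equations form a linear system whose coefficient matrix is a random perturbation of the well-conditioned deterministic interpolation matrix inherited from the complete-data analysis.

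Third, I would close the argument by concentration. A matrix Bernstein inequality applied to the sum over the random index set $\m{\Omega}$ shows that the random interpolation matrix is invertible and stays close to its expectation once $M\gtrsim K\log\sbra{K/\delta}\log\sbra{\sqrt{L}N/\delta}$, which yields well-controlled coefficient rows $\m{\alpha}_k,\m{\beta}_k$. Here the hypothesis that $\lbra{\m{\phi}_j}$ are independent and zero-mean is exactly what makes the off-diagonal contributions to $\m{Q}$ mean-zero, so that the random directions $\m{\phi}_j$ play the role of the random signs in the scalar analysis of \cite{tang2012compressed}. To verify $\twon{\m{Q}(f)}<1$ on the continuum $\bT\backslash\cT$, I would bound $\m{Q}$ and its first two derivatives at the nodes of a fine $\epsilon$-net, again via Bernstein-type inequalities and using the separation $\Delta_\cT\geq\frac{1}{\lfloor(N-1)/4\rfloor}$ to localize $\bar K$, and then extend from the net to all of $\bT$ with a Bernstein polynomial inequality. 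The vector dimension $L$ enters the concentration bounds for the $\ell_2$ norm and, through the union bound over the net, accounts for the $\sqrt{L}N$ inside the logarithm and for the $\log^2$ term in (\ref{formu:AN_bound}).

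The step I expect to be the main obstacle is the uniform off-support bound $\twon{\m{Q}(f)}<1$. In contrast with the scalar SMV case, $\m{Q}$ is vector-valued and its randomness originates jointly from the subsampling of $\m{\Omega}$ and from the random directions $\m{\phi}_j$, which are moreover coupled through the interpolation coefficients. Controlling the $\ell_2$ norm of this random trigonometric vector polynomial simultaneously over a continuum of $f$, while disentangling the two sources of randomness, is the technical heart of the proof and demands the most delicate concentration and net estimates.
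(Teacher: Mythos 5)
Your plan follows essentially the same route as the paper: reduce to a vector-valued dual certificate supported on $\m{\Omega}$, interpolate with a randomly subsampled squared Fej\'er kernel and its derivative, invert the random $2K\times 2K$ block interpolation system via matrix concentration, bound the vector polynomial and its derivatives on a net using the independence and zero mean of the $\m{\phi}_j$ (the paper does this with a vector-form Hoeffding inequality obtained by dilation to a self-adjoint matrix), and extend to all of $\bT$ with Bernstein's polynomial inequality; you also correctly locate where the $\sqrt{L}$ inside the logarithms comes from and identify the uniform off-support bound as the technical heart.

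There is, however, one genuine gap in your first step. You assert that the interpolation conditions plus the strict bound $\twon{\m{Q}(f)}<1$ on $\bT\backslash\cT$ ``will deliver both the optimality and the uniqueness claims.'' They do not by themselves deliver uniqueness of the optimizer: the duality argument only forces any alternative optimizer to be supported on $\cT$, and two distinct combinations $\sum_k c_k\m{a}(f_k,\m{\phi}_k)$ and $\sum_k \widetilde c_k\m{a}(f_k,\widetilde{\m{\phi}}_k)$ with the same support $\cT$ could still agree on the observed rows $\m{\Omega}$ unless the restricted atoms $\lbra{\m{a}_{\m{\Omega}}(f_k)}_{f_k\in\cT}$ are linearly independent. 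This extra hypothesis must be added to the dual-certificate proposition and then verified with high probability; the paper explicitly flags its omission as a flaw in the SMV argument of \cite{tang2012compressed}, and supplies it essentially for free by observing that the deterministic interpolation block satisfies $\m{D}_0=\frac{1}{n+1}\m{A}_{\m{\Omega}}^H\m{G}\m{A}_{\m{\Omega}}$ with $\m{G}$ diagonal and positive, so invertibility of $\m{D}_0$ on the concentration event forces $\m{A}_{\m{\Omega}}$ to have full column rank. You should incorporate this condition and its verification; otherwise the uniqueness half of the theorem is unproved.
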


\subsection{Discussions}

We have proposed two optimization approaches to JSFR by exploiting the joint sparsity in the MMVs. Based on the atomic $\ell_0$ norm approach, we theoretically show that the MMVs help improve the frequency recovery performance in terms of the number of recoverable frequencies. But unfortunately, the resulting optimization problem is NP-hard to solve. We therefore turn to the atomic norm approach and show that this convex relaxation approach can be cast as SDP and solved in a polynomial time. We also provide theoretical results showing that the atomic norm approach can successfully recover the frequencies under similar technical conditions as in \cite{candes2013towards,tang2012compressed}.

At a first glance, both the methods can be viewed as covariance-based by exploiting the structures of the data covariance matrix (obtained as if certain statistical assumptions for the source signals were satisfied). In particular, in both \eqref{formu:AL0_rankmin1} and \eqref{formu:AN_sdp}, the PSD Toeplitz matrix $T\sbra{\m{u}}$, which can be written as in \eqref{formu:VD}, can be viewed as the covariance matrix of the full data candidate $\m{Y}$ that is consistent with the observed data $\m{Y}_{\m{\Omega}}^o$ (see more details in the proofs of Theorems \ref{thm:AL0_rankmin} and \ref{thm:AN_SDP} in Section \ref{sec:proof}). The Toeplitz structure is explicitly given, the PSD is imposed by the first constraint, and the low rank is exploited in the objective function. The essential difference between the two methods lies in the way to exploit the low rank. To be specific, the atomic $\ell_0$ norm method utilizes this structure to the greatest extent possible by directly minimizing the rank, leading to a nonconvex optimization problem. In contrast, the atomic norm method uses convex relaxation and minimizes the nuclear norm or the trace norm of the matrix (note that the additional term $\tr\sbra{\m{W}}$ in \eqref{formu:AN_sdp} helps control the magnitude of $\m{u}$ and avoids a trivial solution). As a result, the theoretical guarantees that we provide actually state that the full data covariance matrix can be exactly recovered using the proposed methods given full or compressive data under certain conditions. Finally, note that source correlations in $\mbra{s_{kt}}$, if present, will be removed in the covariance estimate $T\sbra{\m{u}}$ in both \eqref{formu:AL0_rankmin1} and \eqref{formu:AN_sdp}, whereas they will be retained in the sample covariance used in conventional subspace methods.

The theoretical results presented above extend several existing results from the SMV to the MMV case or from the discrete to the continuous setting. To be specific, Theorem \ref{thm:AL0_guanrantee} is a continuous counterpart of \cite[Theorem 2.4]{chen2006theoretical} which deals with the conventional discrete setting. Theorem \ref{thm:AL0_guanrantee} shows that the number of recoverable frequencies can be increased in general as we take MMVs. This is practically relevant in array processing applications. But in a trivial case where all the sources are coherent, i.e., all the rows of $\mbra{s_{kt}}$ (and thus all the columns of $\m{Y}_{\m{\Omega}}^o$) are identical up to scaling factors, it holds that $\rank\sbra{\m{Y}_{\m{\Omega}}^o}=1$ as in the SMV case and hence, as expected, MMVs do not help improve the performance. Note also that it is generally difficult to compute $\spark\sbra{\cA_{\m{\Omega}}^1}$, except in the full data case where we have $\spark\sbra{\cA_{\m{\Omega}}^1}=N+1$ by the fact that any $N$ atoms in $\cA_{\m{\Omega}}^1$ are linear independent. An interesting topic in future studies will be the selection of $\m{\Omega}$, which in array processing corresponds to geometry design of the sensor array, such that $\spark\sbra{\cA_{\m{\Omega}}^1}$ is maximized.

Theorem \ref{thm:completedata} generalizes \cite[Theorem 1.2]{candes2013towards} from the SMV to the MMV case. Since Theorem \ref{thm:completedata} applies to all kinds of source signals, including the aforementioned trivial case, one cannot expect that the theoretical guarantee improves in the MMV case.

Theorem \ref{thm:incompletedata} generalizes \cite[Theorem I.1]{tang2012compressed} from the SMV to the MMV case. Note that in (\ref{formu:AN_bound}) the dependence of $M$ on $L$ is for controlling the probability of successful recovery. To make it clear, we consider the case when we seek to recover the columns of $\m{Y}^o$ independently via the SMV method in \cite{tang2012compressed}. When $M$ satisfies (\ref{formu:AN_bound}) with $L=1$, each column of $\m{Y}^o$ can be recovered with probability $1-\delta$. It follows that $\m{Y}^o$ can be exactly recovered with probability at least $1-L\delta$. In contrast, if we recover $\m{Y}^o$ via a single convex optimization problem that we propose, then with the same number of measurements the success probability is improved to $1-\sqrt{L}\delta$ (to see this, replace $\delta$ in (\ref{formu:AN_bound}) by $\sqrt{L}\delta$).

We note that in Theorem \ref{thm:incompletedata} the assumption on the phases $\m{\phi}_j$ is relaxed as compared to that in \cite[Theorem I.1]{tang2012compressed} (note that $\m{\phi}_j$'s are assumed in the latter drawn i.i.d. from a uniform distribution). This relaxation is significant in array processing since each $\m{\phi}_j$ corresponds to one source and therefore they do not necessarily obey an identical distribution. Note also that this assumption is weak in the sense that the sources can be coherent, resulting in the aforementioned trivial case. To see this, suppose that the rows of $\mbra{s_{kt}}$ are i.i.d. Gaussian with zero mean and covariance of rank one. Then the sources are certain to be independent and coherent. This explains why the theoretical guarantee given in Theorem \ref{thm:incompletedata} does not improve in the presence of MMVs. In this sense, therefore, the results of Theorems \ref{thm:completedata} and \ref{thm:incompletedata} are referred to as \emph{worst case} analysis.

Our contribution by Theorems \ref{thm:completedata} and \ref{thm:incompletedata} is showing that in the presence of MMVs we can confidently recover the frequencies via a single convex optimization problem by exploiting the joint sparsity therein. Although the worst case analysis we provide cannot shed light on the advantage of MMVs, numerical simulations provided in Section \ref{sec:simulation} indeed imply that the proposed atomic norm approach significantly improves the recovery performance when the source signals are at general positions. We pose such \emph{average case} analysis as a future work.

\section{Connections to Prior Art} \label{sec:connection}

\subsection{Grid-based Joint Sparse Recovery}
The JSFR problem concerned in this paper has been widely studied within the CS framework, typically under the topic of DOA estimation. It has been popular in the past decade to assume that the true frequencies lie on a fixed grid since, according to conventional wisdom on CS, the signal needs to be sparsely represented under a finite discrete dictionary. Now recall the atomic $\ell_p$ norm in (\ref{formu:AL0}) and (\ref{formu:atomicnorm}) with $p=0$ and $1$, respectively, that can be written collectively as:
\equ{\norm{\m{Y}}_{\cA,p}=\inf\lbra{\sum_k \twon{\m{s}_k}^p: \m{Y}=\sum_k\m{a}\sbra{f_k}\m{s}_k, f_k\in\bT}, \label{formu:atomp}}
where $\m{s}_k\in\bC^{1\times L}$. Consequently, the atomic $\ell_0$ norm and the atomic norm can be viewed, respectively, as the continuous counterparts of the $\ell_{2,0}$ norm and the $\ell_{2,1}$ norm in grid-based joint sparse recovery methods (see, e.g., \cite{malioutov2005sparse,hyder2010direction}). It is worth noting that for the existing grid-based methods one cannot expect exact frequency recovery since in practice the true frequencies typically do not lie on the grid. Moreover, even if this on-grid assumption is satisfied, the existing coherence or RIP-based analysis in the discrete setting is very conservative, as compared to the results in this paper, due to high coherence in the case of a dense grid. Readers are referred to \cite{candes2013towards} for detailed discussions on the SMV case.

\subsection{Gridless Joint Sparse Recovery}
To the best of our knowledge, the only discretization-free/gridless technique for JSFR was introduced in \cite{yang2014discretization} prior to this work, termed as the sparse and parametric approach (SPA). Different from the atomic norm technique proposed in this paper, SPA is from a statistical perspective and based on a weighted covariance fitting criterion \cite{stoica2011spice}. But we show next that the two methods are strongly connected. Consider the full data case as an example. In the limiting noiseless case, SPA solves the following problem:
\equ{ \min_{\m{u}\in\bC^N, T\sbra{\m{u}}\geq\m{0}} \tr\sbra{\widehat{\m{R}}\mbra{T\sbra{\m{u}}}^{-1}\widehat{\m{R}}}+\tr\sbra{T\sbra{\m{u}}}, \label{formu:SPA}}
where $\widehat{\m{R}}=\frac{1}{L}\m{Y}^o{\m{Y}^o}^H$ denotes the sample covariance matrix. Let $\m{V}=\frac{1}{L}\sbra{{\m{Y}^o}^H\m{Y}^o}^{\frac{1}{2}}\in\bC^{L\times L}$. Then we have the following equalities/equivalences:
\equ{\begin{split} \text{(\ref{formu:SPA})}
&= \min_{\m{u}, T\sbra{\m{u}}\geq\m{0}} \tr\sbra{\sbra{\m{Y}^o\m{V}}^H \mbra{T\sbra{\m{u}}}^{-1} \m{Y}^o\m{V}} + \tr\sbra{T\sbra{\m{u}}}\\
&= \min_{\m{W},\m{u}} \tr\sbra{\m{W}} + \tr\sbra{T\sbra{\m{u}}},\\
&\quad \st \begin{bmatrix} \m{W} & \sbra{\m{Y}^o\m{V}}^H \\  \m{Y}^o\m{V} & T\sbra{\m{u}} \end{bmatrix}\geq\m{0}\\
&= 2\sqrt{N}\atomn{\m{Y}^o\m{V}},
\end{split} \notag}
where the last equality follows from Theorem \ref{thm:AN_SDP}.
This means that SPA actually computes the atomic norm of
\equ{\m{Y}^o\m{V}=\sum_{k=1}^K \m{a}\sbra{f_k}\sbra{\m{s}_{k}\m{V}}. \label{formu:observmodel_SPA}}
Therefore, SPA can be interpreted as an atomic norm approach with modification of the source signals.
In the SMV case where $\m{V}$ is a positive scalar, the two techniques are exactly equivalent, which has been shown in \cite{yang2015gridless}. While details are omitted, note that a similar result holds in the compressive data case.

\section{Proofs} \label{sec:proof}

The proofs of Theorems 1-5 are provided in this section. While our proofs generalize several results in the literature either from the SMV to the MMV case or from the discrete to the continuous setting, note that they are not straightforward. For example, the proof of Theorem \ref{thm:AN_SDP} does not follow from \cite{tang2012compressed} in the SMV case but is motivated by \cite{yang2014discretization,yang2015gridless}. The main challenge of the proofs of Theorems \ref{thm:completedata} and \ref{thm:incompletedata} lie in how to construct and deal with vector-valued dual polynomials instead of the scalar-valued ones in \cite{candes2013towards} and \cite{tang2012compressed}. Moreover, the proof of Theorem \ref{thm:completedata} forms the basis of the proof of Theorem \ref{thm:incompletedata}. Some inaccuracy in \cite{tang2012compressed} is also pointed out and corrected.

\subsection{Proof of Theorem \ref{thm:AL0_rankmin}} \label{sec:proofAL0}
Let $K=\norm{\m{Y}}_{\cA,0}$ and $K^*=\rank\sbra{T\sbra{\m{u}^*}}$, where $\m{u}^*$ denotes an optimal solution of $\m{u}$ in (\ref{formu:AL0_rankmin}). It suffices to show that $K=K^*$. On one hand, using the Vandermonde decomposition, we have that $T\sbra{\m{u}^*}=\sum_{j=1}^{K^*} p_j\m{a}\sbra{f_j}\m{a}^H\sbra{f_j}$. Moreover, the fact that $\m{Y}$ lies in the range space of $T\sbra{\m{u}^*}$ implies that there exist $\m{s}_j\in\bC^{1\times L}$, $j\in\mbra{K^*}$ such that $\m{Y}=\sum_{j=1}^{K^*}\m{a}\sbra{f_j}\m{s}_{j}$. It follows from the definition of $\norm{\m{Y}}_{\cA,0}$ that $K\leq K^*$.

On the other hand, let $\m{Y}=\sum_{j=1}^K \m{a}\sbra{f_j}\m{s}_j$ be an atomic decomposition of $\m{Y}$. Let $T\sbra{\m{u}}=\sum_{j=1}^K p_j\m{a}\sbra{f_j}\m{a}^H\sbra{f_j}$ and $\m{W}=\sum_{j=1}^K p_j^{-1}\m{s}_j^H\m{s}_j$ for arbitrary $p_j>0$, $j\in\mbra{K}$. Then,
\equ{\begin{bmatrix}\m{W} & \m{Y}^H \\ \m{Y} & T\sbra{\m{u}}\end{bmatrix}= \sum_{j=1}^K p_j \begin{bmatrix}p_j^{-1}\m{s}_j^H\\ \m{a}\sbra{f_j}\end{bmatrix} \begin{bmatrix}p_j^{-1}\m{s}_j & \m{a}\sbra{f_j}^H\end{bmatrix}\geq\m{0}. \notag}
This means that $\sbra{\m{W},\m{u}}$ defines a feasible solution of (\ref{formu:AL0_rankmin}). Consequently, $K^*\leq \rank\sbra{T\sbra{\m{u}}}=K$.

\subsection{Proof of Theorem \ref{thm:AN_SDP}}
We use the following identity whenever $\m{R}\geq\m{0}$:
\equ{\m{y}^H\m{R}^{-1}\m{y}=\min t, \st \begin{bmatrix}t & \m{y}^H \\ \m{y} & \m{R}\end{bmatrix}\geq0. \label{formu:yRy}}
In fact, (\ref{formu:yRy}) is equivalent to defining $\m{y}^H\m{R}^{-1}\m{y}\coloneqq\lim_{\sigma\rightarrow0_+} \m{y}^H\sbra{\m{R}+\sigma\m{I}}^{-1}\m{y}$ when $\m{R}$ loses rank. We also use the following lemma.

\begin{lem}[\cite{yang2015gridless}] Given $\m{R}=\m{A}\m{A}^H\geq\m{0}$, it holds that $\m{y}^H\m{R}^{-1}\m{y}=\min\twon{\m{s}}^2, \st \m{A}\m{s}=\m{y}$. \label{lem:lem1}
\end{lem}

Now we prove Theorem \ref{thm:AN_SDP}. It follows from the constraint in \eqref{formu:AN_SDP} that $T\sbra{\m{u}}\geq\m{0}$ and $\m{W}\geq \m{Y}^H\mbra{T\sbra{\m{u}}}^{-1} \m{Y}$. So, it suffices to show that
\equ{\begin{split}\norm{\m{Y}}_{\cA}=
&\min_{\m{u}} \frac{\sqrt{N}}{2}u_1 + \frac{1}{2\sqrt{N}}\tr\sbra{\m{Y}^H\mbra{T\sbra{\m{u}}}^{-1} \m{Y}},\\
&\st T\sbra{\m{u}}\geq\m{0}, \end{split} \label{formu:AN_sdp222}}
where $u_1$ is the first entry of $\m{u}$.
Let $T\sbra{\m{u}}=\m{A}\m{P}\m{A}^H=\mbra{\m{A}\m{P}^{\frac{1}{2}}} \mbra{\m{A}\m{P}^{\frac{1}{2}}}^H$ be any feasible Vandermonde decomposition, where $\m{A}=\m{A}\sbra{\m{f}}=\mbra{\dots,\m{a}\sbra{f_j},\dots}$ and $\m{P}=\diag\sbra{\dots,p_j,\dots}$ with $p_j>0$. It follows that $u_1=\sum p_j$. For the $t$th column of $\m{Y}$, say $\m{y}_{: t}$, it holds by Lemma \ref{lem:lem1} that
\equ{\begin{split}\m{y}_{: t}^H\mbra{T\sbra{\m{u}}}^{-1}\m{y}_{: t}
&= \min_{\m{v}} \twon{\m{v}}^2, \st \m{A}\m{P}^{\frac{1}{2}}\m{v}=\m{y}_{: t}\\
&= \min_{\m{s}} \twon{\m{P}^{-\frac{1}{2}}\m{s}}^2, \st \m{A}\m{s}=\m{y}_{: t}\\
&= \min_{\m{s}} \m{s}^H\m{P}^{-1}\m{s}, \st \m{A}\m{s}=\m{y}_{: t}.
\end{split} \notag}
It follows that
\equ{\begin{split}
\tr\sbra{\m{Y}^H\mbra{T\sbra{\m{u}}}^{-1}\m{Y}}
&= \sum_{t=1}^N \m{y}_{: t}^H\mbra{T\sbra{\m{u}}}^{-1}\m{y}_{: t} \\
&= \min_{\m{S},\m{A}\sbra{\m{f}}\m{S}=\m{Y}} \tr\sbra{\m{S}^H\m{P}^{-1}\m{S}}.\\
\end{split} \notag}
We complete the proof via the following equalities:
\equ{\begin{split}
& \min_{\m{u}} \frac{\sqrt{N}}{2}u_1 + \frac{1}{2\sqrt{N}}\tr\sbra{\m{Y}^H\mbra{T\sbra{\m{u}}}^{-1} \m{Y}}\\
=& \min_{\substack {\m{f},\m{p}\succeq\m{0}, \m{S}\\ \m{A}\sbra{\m{f}}\m{S}=\m{Y}}}\frac{\sqrt{N}}{2}\sum_j p_j + \frac{1}{2\sqrt{N}}\tr\sbra{\m{S}^H\m{P}^{-1}\m{S}}\\
=& \min_{\substack {\m{f},\m{p}\succeq\m{0}, \m{S}\\ \m{A}\sbra{\m{f}}\m{S}=\m{Y}}} \frac{\sqrt{N}}{2}\sum_j p_j + \frac{1}{2\sqrt{N}}\sum_j \twon{\m{S}_j}^2p_j^{-1}\\
=& \min_{\m{f},\m{S}} \sum_j\twon{\m{S}_j}, \st \m{Y}=\m{A}\sbra{\m{f}}\m{S} \\
=& \norm{\m{Y}}_{\cA},
\end{split} \label{formu:AN_equa}}
where the optimal solution of $p_j$ equals $\frac{1}{\sqrt{N}}\twon{\m{S}_j}$ and the last equality follows from \eqref{formu:atomp}.

\subsection{Proof of Theorem \ref{thm:AL0_guanrantee}}
We use contradiction. Suppose that there exists $\widetilde{\m{Y}}\neq \m{Y}^o$ satisfying that $\widetilde{\m{Y}}_{\m{\Omega}}= \m{Y}^o_{\m{\Omega}}$ and $\widetilde{K}\coloneqq\norm{\widetilde{\m{Y}}}_{\cA,0}\leq \norm{\m{Y}^o}_{\cA,0}= K$. Let $\widetilde{\m{Y}}= \sum_{k=1}^{\widetilde K} \m{a}\sbra{\widetilde f_j}\widetilde{\m{s}}_j$ be an atomic decomposition. Also let $\m{A}_1=\mbra{\m{a}\sbra{f}}_{f\in\cT\backslash{\lbra{\widetilde f_j}}}$ (the matrix consisting of those $\m{a}\sbra{f}$, $f\in\cT\backslash{\lbra{\widetilde f_j}}$), $\m{A}_{12}=\mbra{\m{a}\sbra{f}}_{f\in\cT\cap{\lbra{\widetilde f_j}}}$ and $\m{A}_2=\mbra{\m{a}\sbra{f}}_{f\in{\lbra{\widetilde f_j}}\backslash\cT}$. In addition, let $K_{12}=\abs{\cT\cap{\lbra{\widetilde f_j}}}$ and $\m{A}=\begin{bmatrix}\m{A}_{1} & \m{A}_{12} & \m{A}_2\end{bmatrix}$. Then we have
$\m{Y}^o= \begin{bmatrix}\m{A}_1 & \m{A}_{12}\end{bmatrix} \begin{bmatrix}\m{S}_{1} \\ \m{S}_{12}\end{bmatrix}$ and $\widetilde{\m{Y}}= \begin{bmatrix}\m{A}_{12} & \m{A}_{2}\end{bmatrix} \begin{bmatrix}\m{S}_{21} \\ \m{S}_{2}\end{bmatrix}$, where $\m{S}_1$, $\m{S}_{12}$, $\m{S}_{21}$ and $\m{S}_{2}$ are properly defined. It follows that $\m{Y}^o-\widetilde{\m{Y}}= \m{A} \m{\Upsilon}\neq\m{0}$, where $\m{\Upsilon}=\begin{bmatrix}\m{S}_{1} \\ \m{S}_{12}-\m{S}_{21}\\ -\m{S}_{2}\end{bmatrix}\neq\m{0}$.
On the other hand, it follows from $\widetilde{\m{Y}}_{\m{\Omega}}= \m{Y}^o_{\m{\Omega}}$ that $\m{A}_{\m{\Omega}} \m{\Upsilon}=\m{0}$. Note that $\m{A}_{\m{\Omega}}$ is composed of atoms in $\cA_{\m{\Omega}}^1$ and has a nontrivial null space since we have shown that $\m{\Upsilon}\neq\m{0}$. Then,
\equ{\rank\sbra{\m{A}_{\m{\Omega}}}\geq \spark\sbra{\cA_{\m{\Omega}}^1}-1. \label{formu:rankbound}}
Moreover, for the nullity (dimension of the null space) of $\m{A}_{\m{\Omega}}$ it holds that
\equ{\begin{split}
\text{nullity}\sbra{\m{A}_{\m{\Omega}}}
&\geq \rank\sbra{\m{\Upsilon}}\\
&\geq \rank\sbra{\begin{bmatrix}\m{S}_{1} \\ \m{S}_{12}\end{bmatrix}} - \rank\sbra{\begin{bmatrix}\m{0} \\ \m{S}_{21}\end{bmatrix}}\\
&\geq \rank\sbra{\m{Y}_{\m{\Omega}}^o} - K_{12}.\end{split} \label{formu:nullitybound}}
Consequently, the equality
\equ{\#\text{columns of }\m{A}_{\m{\Omega}}=\rank \sbra{\m{A}_{\m{\Omega}}}+ \text{nullity}\sbra{\m{A}_{\m{\Omega}}} \notag}
together with (\ref{formu:rankbound}) and (\ref{formu:nullitybound}) yields that
$K+\widetilde K-K_{12}\geq \spark\sbra{\cA_{\m{\Omega}}^1}-1+ \rank\sbra{\m{Y}_{\m{\Omega}}^o} - K_{12}$. Therefore,
\equ{2K\geq K+\widetilde K\geq \spark\sbra{\cA_{\m{\Omega}}^1}-1+\rank \sbra{\m{Y}_{\m{\Omega}}^o}, \notag}
which contradicts the condition in (\ref{Kbound}).

To show the uniqueness part, note that the condition in (\ref{Kbound}) implies that $K< \spark\sbra{\cA_{\m{\Omega}}^1}-1$ since $\rank \sbra{\m{Y}_{\m{\Omega}}^o}\leq K$. According to the definition of spark, any $K$ atoms in $\cA_{\m{\Omega}}^1$ are linearly independent. Therefore, the atomic decomposition is unique given the set of frequencies $\cT=\lbra{f_j}_{j=1}^K$. Now suppose there exists another decomposition $\m{Y}^o=\sum_{j=1}^{\widetilde K} \m{a}\sbra{\widetilde f_j}\widetilde{\m{s}}_j$, where $\widetilde K\leq K$ and $\widetilde f_{j_0}\notin\cT$ for some $j_0\in\mbra{\widetilde{K}}$. Note that we have used the same notations for simplicity and we similarly define the other notations. Once again we have that $\m{\Upsilon}\neq \m{0}$ since $\m{A}_2$ is nonempty and $\m{S}_2\neq\m{0}$. The rest of the proof follows from the same arguments as above.

\subsection{Proof of Theorem \ref{thm:completedata}}
The proof of Theorem \ref{thm:completedata} generalizes that in \cite{candes2013towards} (and re-organized in \cite{tang2012compressed}) from the SMV to the MMV case. The main challenge is how to construct and deal with a vector-valued dual polynomial induced by the MMV problem, instead of the scalar-valued one in \cite{candes2013towards}. Since our proof follows similar procedures as in \cite{candes2013towards} and because of the page limit, we only highlight the key steps. Readers are referred to Section 5 of the technical report \cite{yang2014exact1} for the detailed proof.

Following from \cite{tang2012compressed}, we can consider an equivalent case of symmetric data index set $\m{J}=\lbra{-2n,\dots,2n}$, where $n=\left\lfloor\frac{N-1}{4}\right\rfloor$, instead of the set specified in \eqref{formu:observmodel1}. As in \cite{candes2013towards}, we link Theorem \ref{thm:completedata} to a dual polynomial. In particular, Theorem \ref{thm:completedata} holds if there exists a vector-valued dual polynomial $Q: \bT\rightarrow \bC^{1\times L}$,
\equ{Q(f)=\m{a}(f)^H\m{V} \label{formu:dualpoly1}}
satisfying that
\lentwo{\equa{ Q\sbra{f_k}
&=& \m{\phi}_k, \quad f_k\in\cT, \label{formu:cons1}\\ \twon{Q\sbra{f}}
&<& 1, \quad f\in\bT\backslash\cT, \label{formu:cons2}
}}where the coefficient matrix $\m{V}\in\bC^{\abs{\m{J}}\times L}$. The following proof is devoted to construction of $Q(f)$ under the assumptions of Theorem \ref{thm:completedata}.

Inspired by \cite{candes2013towards}, we let
\equ{Q\sbra{f}=\sum_{f_j\in\cT}\m{\alpha}_j \cK\sbra{f-f_j} + \sum_{f_j\in\cT}\m{\beta}_j \cK'\sbra{f-f_j}, \label{formu:dualpoly}}
where $\cK\sbra{f}$ is the squared Fej\'{e}r kernel
\equ{\cK\sbra{f}=\mbra{\frac{\sin(\pi (n+1)f)}{(n+1)\sin\sbra{\pi f}}}^4=\sum_{j=-2n}^{2n}g_je^{-i2\pi jf} \label{formu:kernel}}
in which $g_j$ are constant, $\cK'$ denotes the first-order derivative of $\cK$, and the coefficients $\m{\alpha}_j,\m{\beta}_j\in\bC^{1\times L}$ are specified by imposing \eqref{formu:cons1} and
\equ{Q'\sbra{f_k}=\m{0}, \quad f_k\in\cT.\label{formu:dualpolycons2}}
The equations in \eqref{formu:cons1} and \eqref{formu:dualpolycons2} can be combined into the linear system of equations:
\equ{ \begin{bmatrix}\m{D}_0 & c_0^{-1}\m{D}_1 \\ -c_0^{-1}\m{D}_1 & -c_0^{-2}\m{D}_2\end{bmatrix} \begin{bmatrix}\m{\alpha}\\ c_0\m{\beta}\end{bmatrix} = \begin{bmatrix}\m{\Phi}\\ \m{0}\end{bmatrix}, \label{formu:linsysalpbet}}
where the coefficient matrix $\m{D}\coloneqq\begin{bmatrix}\m{D}_0 & c_0^{-1}\m{D}_1 \\ -c_0^{-1}\m{D}_1 & -c_0^{-2}\m{D}_2\end{bmatrix}$ only depends on the frequency set $\cT$, $c_0=\sqrt{\cK''(0)}$ is a constant, $\m{\Phi}=\begin{bmatrix}\m{\phi}_1^T,\dots,\m{\phi}_K^T\end{bmatrix}^T\in\bC^{K\times L}$, $\m{\alpha}=\begin{bmatrix}\m{\alpha}_1^T,\dots,\m{\alpha}_K^T\end{bmatrix}^T\in\bC^{K\times L}$ and $\m{\beta}\in\bC^{K\times L}$ is similarly defined. Using the fact that the coefficient matrix in \eqref{formu:linsysalpbet} is close to identity \cite{candes2013towards}, we next prove that $\begin{bmatrix}\m{\alpha}\\ c_0\m{\beta}\end{bmatrix}$ is close to $\begin{bmatrix}\m{\Phi}\\ \m{0}\end{bmatrix}$. Different from the SMV case in which $\m{\alpha}_j$ and $\m{\beta}_j$ are scalars, the difficulty in our proof is how to quantify this closeness. To do this, we define the $\ell_{2,\infty}$ matrix norm and its induced operator norm as follows.

\begin{defi} We define the $\ell_{2,\infty}$ norm of $\m{X}\in\bC^{d_1\times d_2}$ as
\equ{\twoinfn{\m{X}}=\max_{j} \twon{\m{X}_j} \notag}
and its induced norm of a linear operator $\m{\cP}: \bC^{d_1\times d_2}\rightarrow\bC^{d_3\times d_2}$ as
\equ{\twoinfn{\m{\cP}}=\sup_{\m{X}\neq\m{0}} \frac{\twoinfn{\m{\cP}\m{X}}}{\twoinfn{\m{X}}}= \sup_{\twoinfn{\m{X}}\leq1} \twoinfn{\m{\cP}\m{X}}, \notag}
where $\m{X}_j$ denotes the $j$th row of $\m{X}$, and $d_1$, $d_2$ and $d_3$ are positive integers. \label{def:matrixnorm}
\end{defi}

By Definition \ref{def:matrixnorm}, we have that $\twoinfn{\m{\Phi}}=1$ and expect to bound $\twoinfn{\m{\alpha}}$ and $\twoinfn{\m{\beta}}$ using the induced norm of the operators $\m{D}_j$, $j=0,1,2$. To do so, we calculate the induced norm first. Interestingly, the induced $\ell_{2,\infty}$ norm is identical to the $\ell_{\infty}$ norm, which is stated in the following result.

\begin{lem}[\cite{yang2014exact1}] $\twoinfn{\m{\cP}}=\inftyn{\m{\cP}}$ for any linear operator $\m{\cP}$ defined by a matrix $\m{P}$ such that $\m{\cP}\m{X}=\m{P}\m{X}$ for any $\m{X}$ of proper dimension. \label{lem:twoinfinitynorm}
\end{lem}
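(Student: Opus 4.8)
The plan is to prove the two inequalities $\twoinfn{\m{\cP}}\leq\inftyn{\m{\cP}}$ and $\twoinfn{\m{\cP}}\geq\inftyn{\m{\cP}}$ separately, using the standard fact that the induced $\ell_\infty$ operator norm $\inftyn{\m{\cP}}=\inftyn{\m{P}}$ equals the maximum absolute row sum $\max_i\sum_k\abs{P_{ik}}$ of the representing matrix $\m{P}$. The underlying observation that makes the whole thing work is that multiplication by $\m{P}$ acts on the \emph{rows} of $\m{X}$: writing $\m{Y}=\m{\cP}\m{X}=\m{P}\m{X}$, the $i$th row of $\m{Y}$ is the linear combination $\m{Y}_i=\sum_k P_{ik}\m{X}_k$ of the rows of $\m{X}$.

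For the upper bound I would apply the triangle inequality for the Euclidean norm row-by-row: $\twon{\m{Y}_i}=\twon{\sum_k P_{ik}\m{X}_k}\leq\sum_k\abs{P_{ik}}\twon{\m{X}_k}\leq\left(\sum_k\abs{P_{ik}}\right)\twoinfn{\m{X}}$. Taking the maximum over $i$ gives $\twoinfn{\m{\cP}\m{X}}\leq\inftyn{\m{\cP}}\,\twoinfn{\m{X}}$, hence $\twoinfn{\m{\cP}}\leq\inftyn{\m{\cP}}$.

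The key step is the reverse inequality, which requires an explicit test matrix making the triangle inequality above tight. Let $i^*$ achieve $\max_i\sum_k\abs{P_{ik}}$ and fix any unit row vector $\m{e}$ of the appropriate width with $\twon{\m{e}}=1$. Define $\m{X}$ row-by-row by $\m{X}_k=\frac{\overline{P_{i^*k}}}{\abs{P_{i^*k}}}\m{e}$ when $P_{i^*k}\neq0$ and $\m{X}_k=\m{0}$ otherwise; each nonzero row has $\twon{\m{X}_k}=1$, so $\twoinfn{\m{X}}\leq1$. With this choice the phases align so that $P_{i^*k}\m{X}_k=\abs{P_{i^*k}}\m{e}$, and therefore the $i^*$th row of $\m{\cP}\m{X}$ equals $\left(\sum_k\abs{P_{i^*k}}\right)\m{e}$, whose Euclidean norm is exactly $\sum_k\abs{P_{i^*k}}=\inftyn{\m{\cP}}$. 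Hence $\twoinfn{\m{\cP}}\geq\twoinfn{\m{\cP}\m{X}}\geq\inftyn{\m{\cP}}$, and combining the two bounds yields the claim.

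There is no real obstacle here beyond choosing the alignment of the rows of the test matrix correctly, so that the complex phases of the maximizing row $P_{i^*,\cdot}$ cancel and the Euclidean combination collapses to the scalar absolute row sum. The point worth emphasizing is that a single common direction $\m{e}$, shared across all rows of $\m{X}$, suffices; this reduces the matrix problem to the familiar scalar $\ell_\infty$ operator-norm computation and is precisely what lets the SMV bounds on $\inftyn{\m{D}_j}$ be reused verbatim in the vector-valued setting.
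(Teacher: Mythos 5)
Your proof is correct, but it reaches the two inequalities by a genuinely different route than the paper, most notably for the upper bound. The paper proves $\twoinfn{\m{\cP}}\leq\inftyn{\m{\cP}}$ by writing $\twon{\m{P}_j\m{X}}^2=\tr\sbra{\m{R}\m{P}_j^H\m{P}_j}$ with $\m{R}=\m{X}\m{X}^H$, characterizing the constraint $\twoinfn{\m{X}}\leq1$ through $\m{R}\geq\m{0}$ and $\m{R}\preccurlyeq\m{1}$, and then applying a H\"older bound $\inp{\text{vec}\sbra{\m{R}},\text{vec}\sbra{\m{P}_j^H\m{P}_j}}\leq\inftyn{\text{vec}\sbra{\m{R}}}\onen{\text{vec}\sbra{\m{P}_j^H\m{P}_j}}=\onen{\m{P}_j}^2$; your row-wise triangle inequality $\twon{\sum_k P_{ik}\m{X}_k}\leq\sum_k\abs{P_{ik}}\twon{\m{X}_k}$ delivers the same bound in one line and is considerably more elementary, avoiding the Gram-matrix reformulation entirely. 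For the lower bound the two arguments are morally the same: the paper restricts to matrices $\m{X}$ supported on a single column, reducing to the known scalar identity $\inftyn{\m{P}}=\sup_{\inftyn{\m{x}}\leq1}\inftyn{\m{P}\m{x}}$, whereas you make the extremizer explicit by taking the phase-aligned sign vector of the maximizing row $\m{P}_{i^*}$ tensored with a common unit direction $\m{e}$ — which is exactly the extremal vector hidden inside the scalar computation the paper invokes. Your closing remark, that a single shared direction $\m{e}$ suffices and this is what lets the SMV bounds on $\inftyn{\m{D}_j}$ transfer verbatim, is precisely the point of the lemma in the paper's subsequent use. Both proofs are complete; yours is shorter and more self-contained.
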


By Lemma \ref{lem:twoinfinitynorm} the $\ell_{2,\infty}$ operator norm of $\m{D}_j$, $j=0,1,2$ equals their $\ell_{\infty}$ norm that has been derived in \cite{candes2013towards}. Then, under the assumptions of Theorem \ref{thm:completedata} and using the results in \cite{candes2013towards}, we can show that
\lentwo{\equa{ \twoinfn{\m{\alpha}-\m{\Phi}}
&\leq& 8.824\times10^{-3}, \label{formu:alphabound}\\ \twoinfn{\m{\beta}}
&\leq& \frac{1.647}{n}\times10^{-2}. \label{formu:betabound}}}

Finally, we complete the proof by showing that the constructed polynomial $Q\sbra{f}$ satisfies (\ref{formu:cons2}) using \eqref{formu:alphabound}, \eqref{formu:betabound} and the bounds on $\cK\sbra{f-f_k}$ and its derivatives given in \cite{candes2013towards}. As in \cite{candes2013towards}, we divide $\bT$ into several intervals that are either neighborhood of or far from some $f_k\in\cT$. If $f$ is far from every $f_k\in\cT$, then we can show that $\twon{Q\sbra{f}}\leq 0.99992$. Otherwise, we can show that on the neighborhood of $f_k\in\cT$, the second derivative of $\twon{Q\sbra{f}}^2$ is negative. This means that $\twon{Q\sbra{f}}^2$ is a strictly concave function and achieves its maximum $1$ at the only stationary point $f_k$ by \eqref{formu:dualpolycons2}. So we can conclude \eqref{formu:cons2} and complete the proof.

\subsection{Proof of Theorem \ref{thm:incompletedata}}
The proof of Theorem \ref{thm:completedata} in the last subsection forms the basis of the proof of Theorem \ref{thm:incompletedata} that will be given following similar steps as in \cite{tang2012compressed}. As in the full data case, we only highlight the key steps of our proof and interested readers are referred to \cite[Section 6]{yang2014exact1} for the details. Similarly, we can also consider the symmetric case of $\m{J}=\lbra{-2n,\dots,2n}$ and start with the dual certificate. In particular, $\m{Y}^o=\sum_{k=1}^K c_k\m{a}\sbra{f_k,\m{\phi}_k}$ is the unique optimizer to (\ref{formu:ANmin}) and provides the unique atomic decomposition satisfying that $\atomn{\m{Y}^o}=\sum_{k=1}^K c_k$ if 1) $\lbra{\m{a}_{\m{\Omega}}\sbra{f_k}}_{f_k\in\cT}\subset \cA_{\m{\Omega}}^1$ are linearly independent and 2) there exists a vector-valued dual polynomial $\overline{Q}(f)=\m{a}^H(f)\m{V}\in \bC^{1\times L}$ as in \eqref{formu:dualpoly1} satisfying \eqref{formu:cons1}, \eqref{formu:cons2} and the additional constraint that
\equ{\m{V}_j= \m{0}, \quad j\notin \m{\Omega}. \label{formu:cons3_inc}}
Note that the condition of linear independence above is necessary to prove the uniqueness part but is neglected in \cite{tang2012compressed}. We will show later that this condition is satisfied for free when we construct the dual polynomial $\overline{Q}\sbra{f}$ under the assumptions of Theorem \ref{thm:incompletedata}. As in \cite{tang2012compressed}, we consider an equivalent Bernoulli observation model in which the samples indexed by $\m{J}$ are observed independently with probability $p=\frac{M}{4n}$. In mathematics, let $\lbra{\delta_j}_{j\in\m{J}}$ be i.i.d. Bernoulli random variables such that
\equ{\bP\sbra{\delta_j=1}=p,}
where $\delta_j=1$ or $0$ indicates whether the $j$th entry in $\m{J}$ is observed or not. It follows that the sampling index set $\m{\Omega}=\lbra{j:\delta_j=1}$.

Inspired by \cite{tang2012compressed}, we let
\equ{\overline Q\sbra{f}=\sum_{f_j\in\cT}\m{\alpha}_j \overline\cK\sbra{f-f_j} + \sum_{f_j\in\cT}\m{\beta}_j \overline\cK'\sbra{f-f_j}, \label{formu:randpoly}}
where $\overline\cK\sbra{f}$ is a random analog of $\cK\sbra{f}$ as defined in \eqref{formu:kernel}:
\equ{\overline\cK\sbra{f}= \sum_{j=-2n}^{2n} \delta_jg_n\sbra{j}e^{-i2\pi jf}. \label{formu:cKbar}}
It is clear that $\bE \overline\cK\sbra{f}=p\cK\sbra{f}$ and similar result holds for its derivatives. Again, we impose for the coefficients $\m{\alpha}_j$, $\m{\beta}_j\in\bC^{1\times L}$ that
\equ{\overline{\m{D}} \begin{bmatrix}\m{\alpha}\\ c_0\m{\beta}\end{bmatrix} = \begin{bmatrix}\m{\Phi}\\ \m{0}\end{bmatrix},}
where $\overline{\m{D}}$ is a random analog of $\m{D}$ in \eqref{formu:linsysalpbet} with $\bE \overline{\m{D}}=p{\m{D}}$. It is clear that $\overline{Q}(f)$ above already satisfies (\ref{formu:cons1}) and \eqref{formu:cons3_inc}. The remaining task is showing that it also satisfies (\ref{formu:cons2}) under the assumptions of Theorem \ref{thm:incompletedata}.

Let $Q(f)$ be the dual polynomial in \eqref{formu:dualpoly1} that is the full data case counterpart of $\overline{Q}(f)$. As in \cite{tang2012compressed}, we need to show that $\overline{Q}(f)$ (and its derivatives) is tightly concentrated around $Q(f)$ (and its derivatives) when the sample size $M$ satisfies \eqref{formu:AN_bound}. To do this, define two events
\lentwo{\equa{\cE_{1}
&=&\lbra{\twon{p^{-1}\overline{\m{D}}-\m{D}}\leq\frac{1}{4}},\\ \cE_2
&=&\lbra{\sup_{f\in\bT_{\text{grid}}} c_0^{-l} \twon{\overline Q^{\sbra{l}}-Q^{\sbra{l}}} \leq \frac{\epsilon}{3}, l=0,1,2,3}}
}where $\bT_{\text{grid}}\subset\bT$ and $\epsilon>0$ are a set of discrete points and a small number, respectively, to specify. It has been shown in \cite{tang2012compressed} that $\overline{\m{D}}$ is invertible on $\cE_{1}$ which happens with probability at least $1-\delta$ if
\equ{M\geq C_1 K\log \frac{K}{\delta}}
and if the frequency separation condition is satisfied, where $C_1$ is constant. Note that the aforementioned linear independence of $\lbra{\m{a}_{\m{\Omega}}\sbra{f_k}}_{f_k\in\cT}\subset \cA_{\m{\Omega}}^1$ can be shown based on this result (see \cite[Lemma 6.4]{yang2014exact1}). We next focus on the case when $\cE_{1}$ happens. It follows that
\equ{ \begin{bmatrix}\m{\alpha}\\ c_0\m{\beta}\end{bmatrix} = \overline{\m{D}}^{-1}\begin{bmatrix}\m{\Phi}\\ \m{0}\end{bmatrix}= \overline{\m{L}}\m{\Phi},}
where $\overline{\m{L}}\in\bC^{2K\times K}$ denotes the left part of $\overline{\m{D}}^{-1}$. Therefore, as in \cite{tang2012compressed}, we have that
\equ{c_0^{-l} \mbra{\overline Q^{\sbra{l}}(f)-Q^{\sbra{l}}(f)} = H_1(f)\m{\Phi} + H_2(f)\m{\Phi},}
where $H_1(f),H_2(f)\in\bC^{1\times K}$ are as defined and bounded in \cite{tang2012compressed}. The main difference from \cite{tang2012compressed} lies in the fact that $\m{\Phi}$ is a $K\times L$ matrix instead of a $K\times 1$ vector. To show that both $\twon{H_1(f)\m{\Phi}}$ and $\twon{H_2(f)\m{\Phi}}$ are concentrated around 0 with high probability, we need the following vector-form Hoeffding's inequality that can be proven based on \cite[Theorem 1.3]{tropp2012user}.

\begin{lem}[\cite{yang2014exact1}] Let the rows of $\m{\Phi}\in\bC^{K\times L}$ be sampled independently on the complex hyper-sphere $\bS^{2L-1}$ with zero mean. Then, for all $\m{w}\in\bC^K$, $\m{w}\neq\m{0}$, and $t\geq0$,
\equ{\bP\sbra{\twon{\m{w}^H\m{\Phi}}\geq t } \leq \sbra{L+1}e^{-\frac{t^2}{8\twon{\m{w}}^2}}. \notag} \label{lem:vectorHoeffding}
\end{lem}

Using Lemma \ref{lem:vectorHoeffding} we can show that $\cE_2$ happens with probability at least $1-\delta$ if
\equ{\begin{split} M\geq C_2\frac{1}{\epsilon^2}\max\Big\{
&\log\frac{\abs{\bT_{\text{grid}}}}{\delta} \log\frac{L\abs{\bT_{\text{grid}}}}{\delta}, \\
& \quad K\log\frac{K}{\delta} \log\frac{L\abs{\bT_{\text{grid}}}}{\delta}\Big\} \end{split} \label{formu:Mboundgrid}}
among other assumptions in Theorem \ref{thm:incompletedata}, where $C_2$ is constant.
This result is then extended, as in \cite{tang2012compressed}, from $\bT_{\text{grid}}$ to the whole unit circle $\bT$ by choosing some $\bT_{\text{grid}}$ satisfying that
\equ{\abs{\bT_{\text{grid}}}< \frac{3C_3\sqrt{L}n^3}{\epsilon}, \label{eq:gridcard}}
where $C_3$ is also constant. This means that $\overline{Q}(f)$ (and its derivatives) is concentrated around $Q(f)$ (and its derivatives) with high probability. Now we are ready to complete the proof by showing that $\twon{\overline{Q}(f)}$ satisfies (\ref{formu:cons2}) using the properties of $Q(f)$ shown in the last subsection and by properly choosing $\epsilon$. In particular, letting $\epsilon=10^{-5}$, $\twon{\overline{Q}(f)}$ can still be well bounded by 1 from above when $f$ is far from every $f_k\in\cT$. When $f$ is in the neighborhood of some $f_k\in\cT$, the second derivative of $\twon{\overline{Q}\sbra{f}}^2$ is concentrated around the second derivative of $\twon{Q\sbra{f}}^2$ and thus it is negative. It follows that $\twon{\overline{Q}\sbra{f}}^2$ is strictly concave and achieves the maximum $1$ at the only stationary point $f_k$. Finally, to close the proof, note that inserting \eqref{eq:gridcard} into \eqref{formu:Mboundgrid} resulting in the bound in \eqref{formu:AN_bound}.

\section{Numerical Simulations} \label{sec:simulation}

\subsection{Full Data}
We consider the full data case and test the frequency recovery performance of the proposed atomic norm method with respect to the frequency separation condition. In particular, we consider two types of frequencies, equispaced and random, and two types of source signals, uncorrelated and coherent. We fix $N=128$ and vary $\Delta_{\text{min}}$ (a lower bound of the minimum separation of frequencies) from $1.05N^{-1}$ (or $0.9N^{-1}$ for random frequencies) to $2N^{-1}$ at a step of $0.05N^{-1}$. In the case of equispaced frequencies, for each $\Delta_{\text{min}}$ we generate a set of frequencies $\cT$ of the maximal cardinality $\lfloor\Delta_{\text{min}}^{-1}\rfloor$ with frequency separation $\Delta_{\cT}=\frac{1}{\lfloor\Delta_{\text{min}}^{-1}\rfloor}\geq\Delta_{\text{min}}$. In the case of random frequencies, we generate the frequency set $\cT$, $\Delta_{\cT}\geq\Delta_{\text{min}}$, by repetitively adding new frequencies (generated uniformly at random)  till no more can be added. Therefore, any two adjacent frequencies in $\cT$ are separated by a value in the interval $[\Delta_{\text{min}},2\Delta_{\text{min}})$. It follows that $\abs{\cT}\in\left(\frac{1}{2}\Delta_{\text{min}}^{-1},\Delta_{\text{min}}^{-1}\right]$. We empirically find that $\bE\abs{\cT}\approx\frac{3}{4}\Delta_{\text{min}}^{-1}$ which is the mid-point of the interval above.

\begin{figure}
\centering
  \subfigure[Equispaced frequencies]{
    \label{Fig:Complete_L_UniformFreq}
    \includegraphics[width=3in]{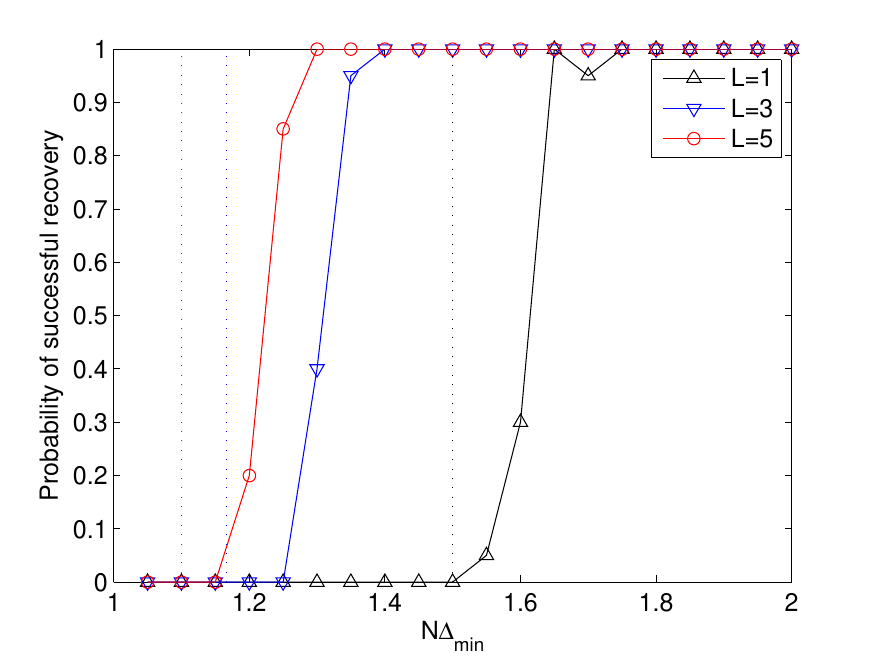}}
  \subfigure[Random frequencies]{
    \label{Fig:Complete_L_NonuniformFreq}
    \includegraphics[width=3in]{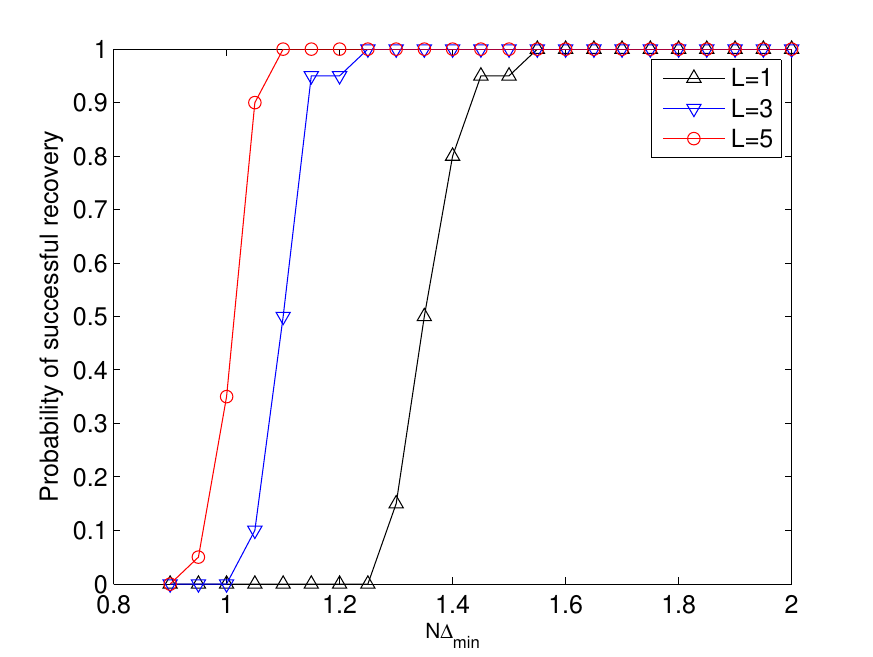}}
\centering
\caption{Frequency recovery results with respect to the number of measurement vectors $L$ in the case of full data and uncorrelated sources.} \label{Fig:Completedata1}
\end{figure}

We first consider uncorrelated sources, where the source signals $\m{S}=\mbra{s_{kt}}\in\bC^{K\times L}$ in (\ref{formu:observmodel1}) are drawn i.i.d. from a standard complex Gaussian distribution. Moreover, we consider the number of measurement vectors $L=1,3, \text{ and }5$. For each value of $\Delta_{\text{min}}$ and each type of frequencies, we carry out 20 Monte Carlo runs and calculate the success rate of frequency recovery. In each run, we generate $\cT$ and $\m{S}\in\bC^{K\times 5}$ and obtain the full data $\m{Y}^o$. For each value of $L$, we attempt to recover the frequencies using the proposed atomic norm method, implemented by SDPT3 \cite{toh1999sdpt3} in Matlab, based on the first $L$ columns of $\m{Y}^o$. The frequencies are considered to be successfully recovered if the root mean squared error (RMSE) is less than $10^{-8}$.

The simulation results are presented in Fig. \ref{Fig:Completedata1}, which verify the conclusion of Theorem \ref{thm:completedata} that the frequencies can be exactly recovered using the proposed atomic norm method under a frequency separation condition. When more measurement vectors are available, the recovery performance improves and it seems that a weaker frequency separation condition is sufficient to guarantee exact frequency recovery. By comparing Fig. \ref{Fig:Complete_L_UniformFreq} and Fig. \ref{Fig:Complete_L_NonuniformFreq}, it also can be seen that a stronger frequency separation condition is required in the case of equispaced frequencies where more frequencies are present and they are located more closely.

\begin{figure}
\centering
  \subfigure[Equispaced frequencies]{
    \label{Fig:Complete_CohSource_UniformFreq}
    \includegraphics[width=3in]{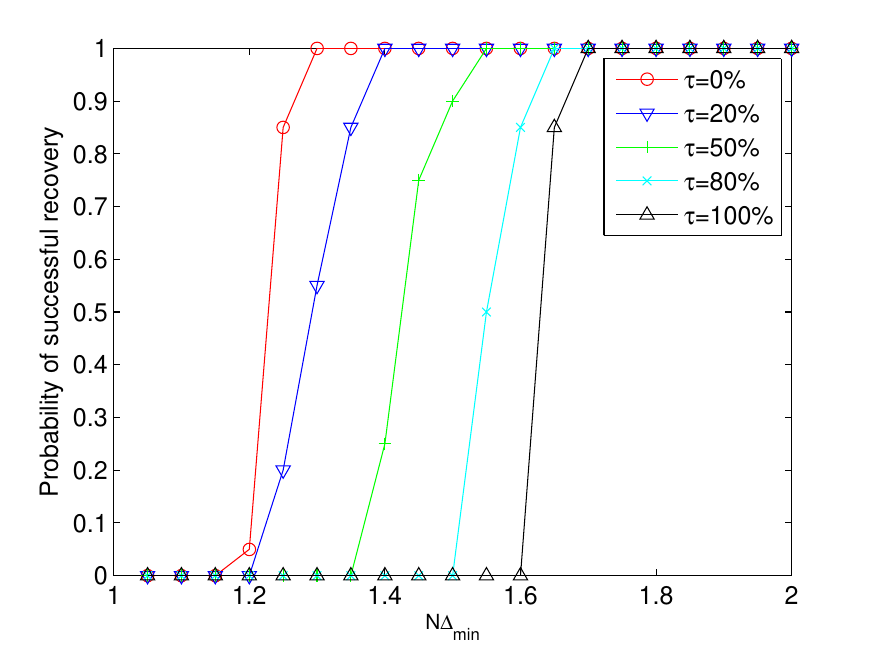}}
  \subfigure[Random frequencies]{
    \label{Fig:Complete_CohSource_NonuniformFreq}
    \includegraphics[width=3in]{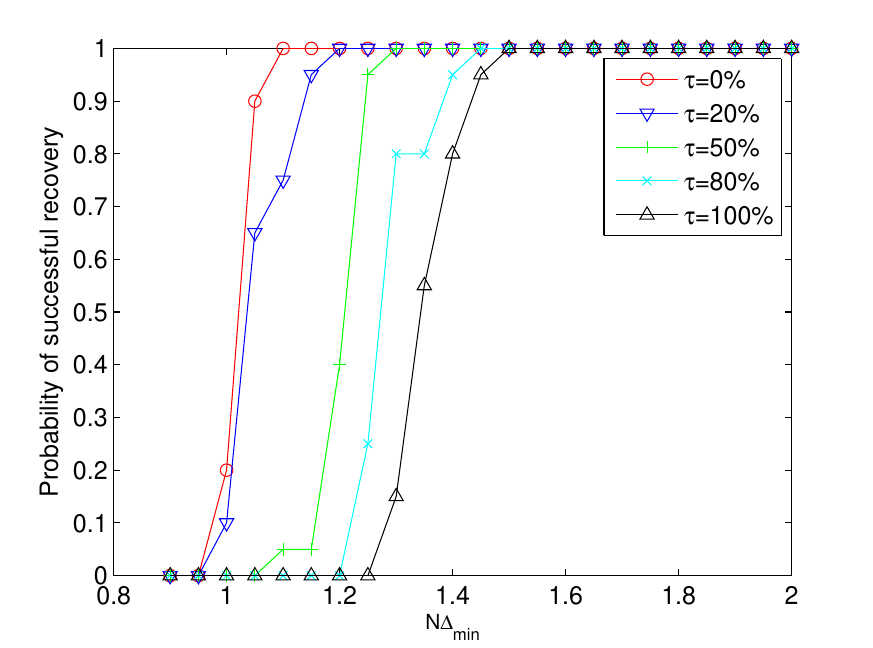}}
\centering
\caption{Frequency recovery results with respect to the percentage of coherent sources $\tau$ in the case of full data and coherent sources, with $L=5$.} \label{Fig:Completedata2}
\end{figure}

We next consider coherent sources. In this simulation, we fix $L=5$ and consider different percentages, denoted by $\tau$, of the $K$ source signals which are coherent (identical up to a scaling factor). It follows that $\tau=0\%$ refers to the case of uncorrelated sources considered previously. $\tau=100\%$ means that all the sources signals are coherent and the problem is equivalent to the SMV case. For each type of frequencies, we consider five values of $\tau$ ranging from $0\%$ to $100\%$ and calculate each success rate over 20 Monte Carlo runs.

Our simulation results are presented in Fig. \ref{Fig:Completedata2}. It can be seen that, as $\tau$ increases, the success rate decreases and a stronger frequency separation condition is required for exact frequency recovery. As $\tau$ equals the extreme value $100\%$, the curves of success rate approximately match those for $L=1$ in Fig. \ref{Fig:Completedata1}, verifying that taking MMVs does not necessarily improve the performance of frequency recovery.\footnote{The slight differences between the curves in Fig. \ref{Fig:Completedata1} and Fig. \ref{Fig:Completedata2} are partially caused by the fact that, to simulate coherence sources, the two datasets are generated slightly differently.}

Finally, we report the computational speed of the proposed atomic norm method. It takes about 11s to solve one SDP on average on a PC and the CPU times differ slightly for the three values of $L$. About 22 hours are used in total to produce the data generating Fig. \ref{Fig:Completedata1} and Fig. \ref{Fig:Completedata2}.

%\begin{figure*}
%\centering
%  \subfigure[Equispaced frequencies, uncorrelated sources]{
%    \label{Fig:Complete_L_UniformFreq}
%    \includegraphics[width=2.5in]{Fig_Complete_L_UniformFreq.pdf}} %
%  \subfigure[Random frequencies, uncorrelated sources]{
%    \label{Fig:Complete_L_NonuniformFreq}
%    \includegraphics[width=2.5in]{Fig_Complete_L_NonuniformFreq.pdf}}
%  \subfigure[$L=5$, equispaced frequencies, $\tau$ coherent sources]{
%    \label{Fig:Complete_CohSource_UniformFreq}
%    \includegraphics[width=2.5in]{Fig_Complete_CohSource_UniformFreq.pdf}}%
%  \subfigure[$L=5$, random frequencies, $\tau$ coherent sources]{
%    \label{Fig:Complete_CohSource_NonuniformFreq}
%    \includegraphics[width=2.5in]{Fig_Complete_CohSource_NonuniformFreq.pdf}}
%\centering
%\caption{Frequency recovery results with respect to the minimum frequency separation $\Delta_{\text{min}}$ in the full data case.} \label{Fig:Completedata}
%\end{figure*}

\subsection{Compressive Data}

In the compressive data case, we study the so-called phase transition phenomenon in the $\sbra{M,K}$ plane. In particular, we fix $N=128$, $L=5$ and $\Delta_{\text{min}}=1.2N^{-1}$, and study the performance of the proposed ANM method in signal and frequency recovery with different settings of the source signal. The frequency set $\cT$ is randomly generated with $\Delta_{\cT}\geq\Delta_{\text{min}}$ and $\abs{\cT}=K$ (differently from that in the last subsection, the process of adding frequencies is terminated as $\abs{\cT}=K$). In our simulation, we vary $M=8,12,\dots,128$ and at each $M$, $K=2,4,\dots,\min(M,84)$ since it is difficult to generate a set of frequencies with $K>84$ under the aforementioned frequency separation condition. In this simulation, we consider temporarily correlated sources. In particular, suppose that each row of $\m{S}$ has a Toeplitz covariance matrix $\m{R}\sbra{r}=\begin{bmatrix}1&r&\dots&r^4\\ r&1&\dots&r^3 \\ \vdots & \vdots & \ddots & \vdots\\ r^4 & r^3 & \dots & 1\\\end{bmatrix}\in\bR^{5\times5}$ (up to a positive scaling factor). Therefore, $r=0$ means that the source signals at different snapshots are uncorrelated while $r=\pm1$ means completely correlated and corresponds to the trivial case. We first generate $\m{S}_0$ from an i.i.d. standard complex Gaussian distribution and then let $\m{S}\sbra{r}=\m{S}_0\m{R}\sbra{r}^{\frac{1}{2}}$, where we consider $r=0,0.5, 0.9, 1$. For each combination $\sbra{M,K}$, we carry out 20 Monte Carlo runs and calculate the rate of successful recovery with respect to $r$. The recovery is considered successful if the relative RMSE of data recovery, measured by $\frobn{\m{Y}^*-\m{Y}^o}/\frobn{\m{Y}^o}$, is less than $10^{-8}$ and the RMSE of frequency recovery is less than $10^{-6}$, where $\m{Y}^*$ denotes the solution of $\m{Y}$.

\begin{figure*}
\centering
  \subfigure[$r=0$]{
    \label{Fig:pt00}
    \includegraphics[width=2.5in]{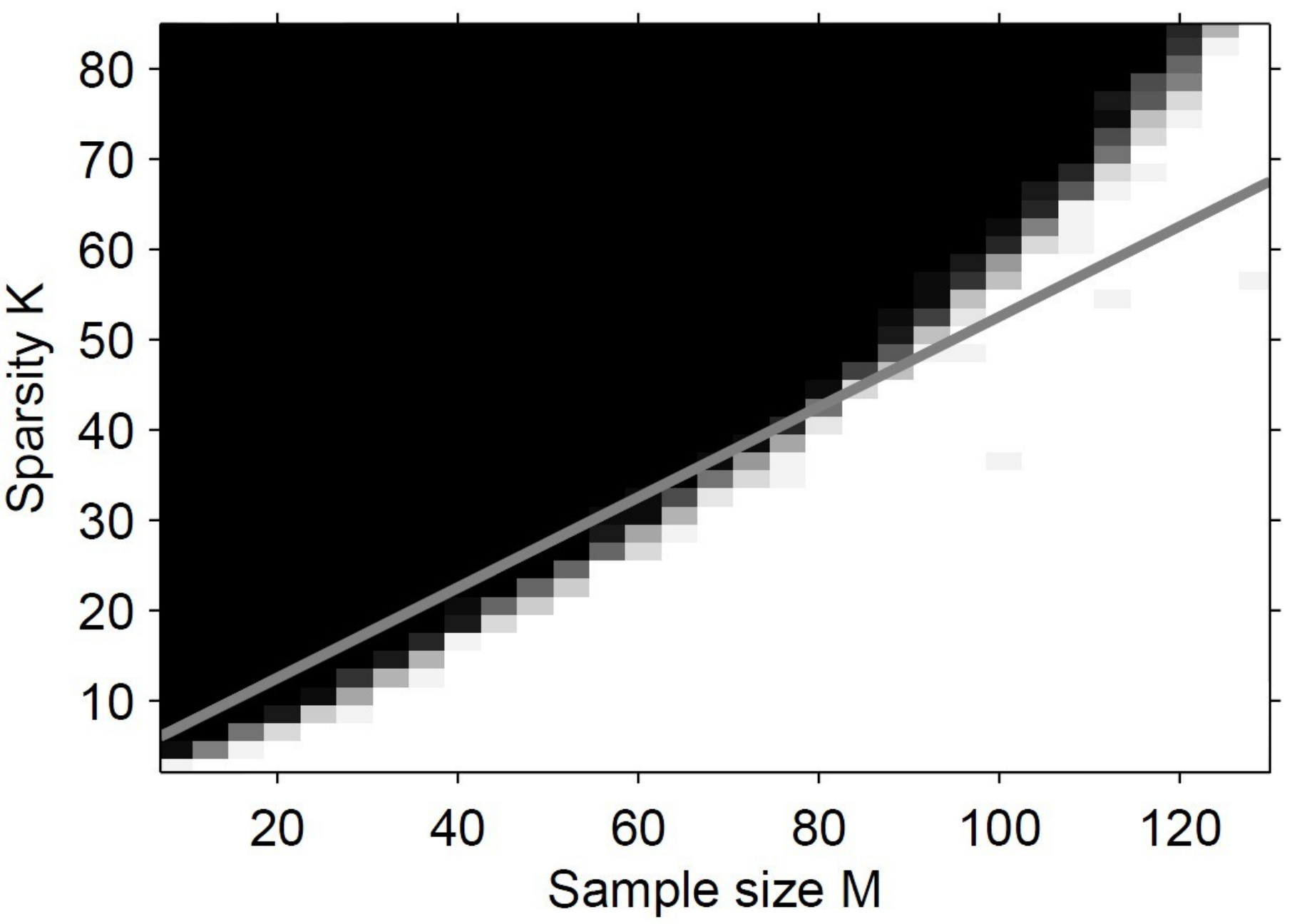}} %
  \subfigure[$r=0.5$]{
    \label{Fig:pt50}
    \includegraphics[width=2.5in]{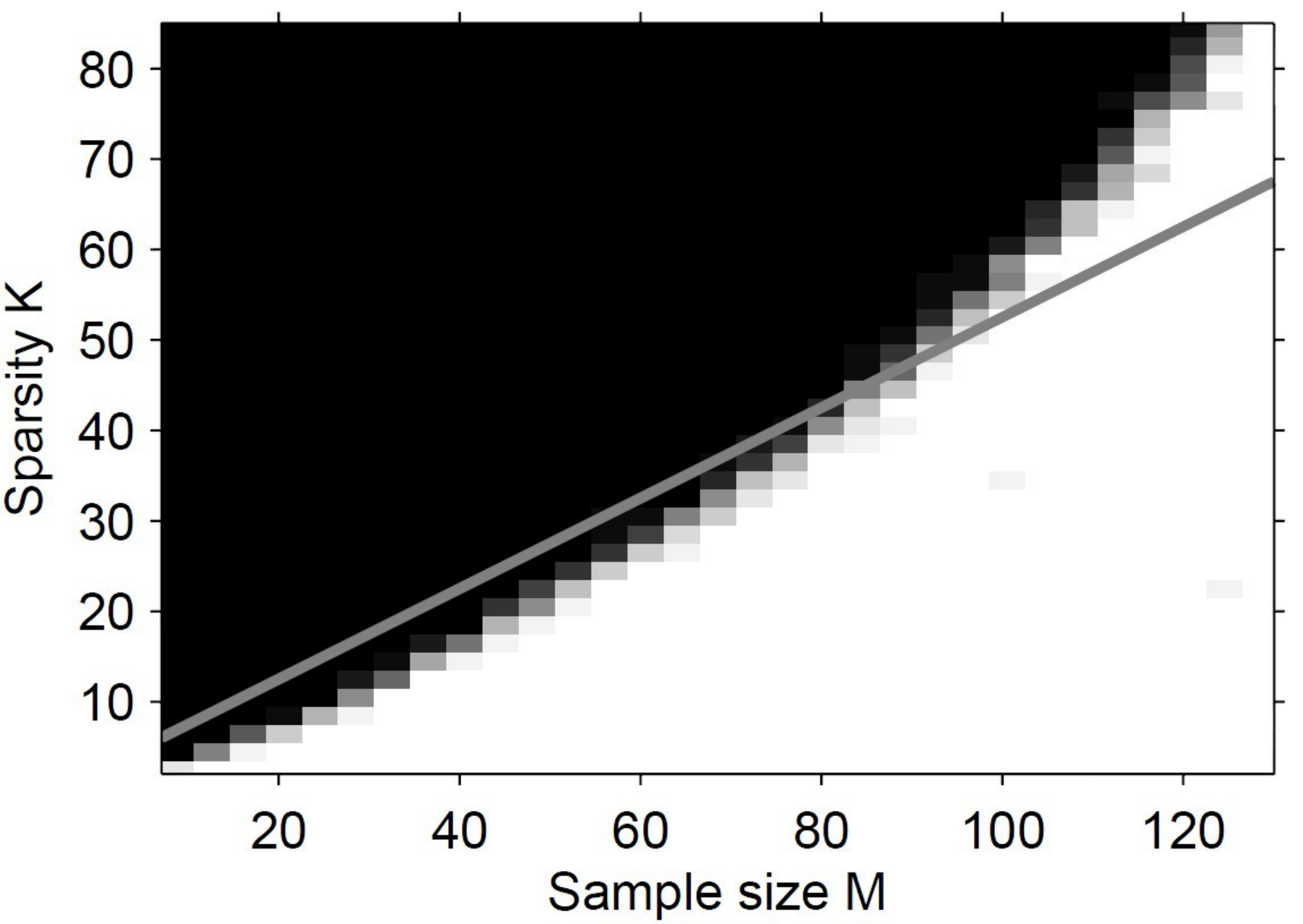}}
  \subfigure[$r=0.9$]{
    \label{Fig:pt90}
    \includegraphics[width=2.5in]{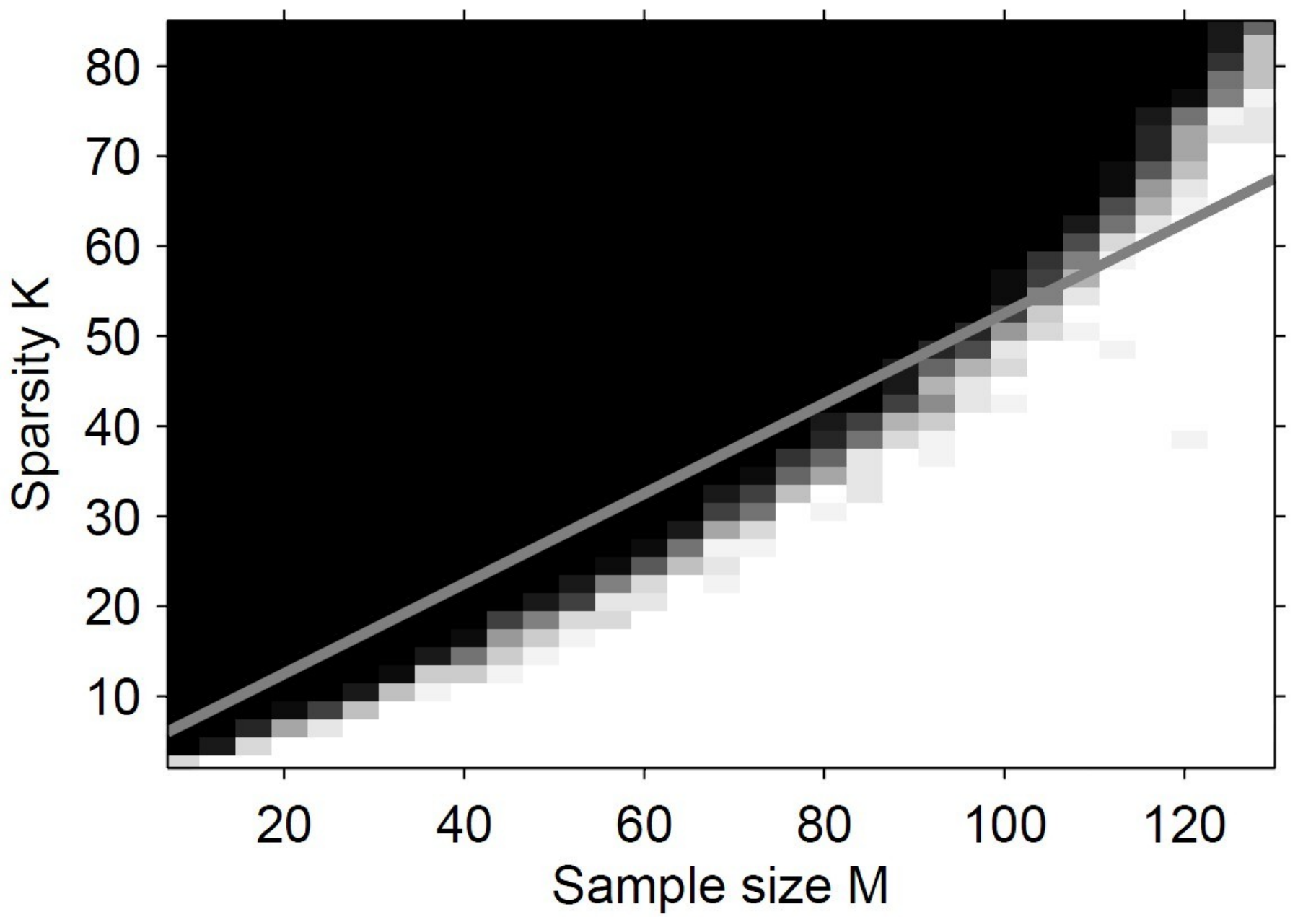}}%
  \subfigure[$r=1$]{
    \label{Fig:pt100}
    \includegraphics[width=2.5in]{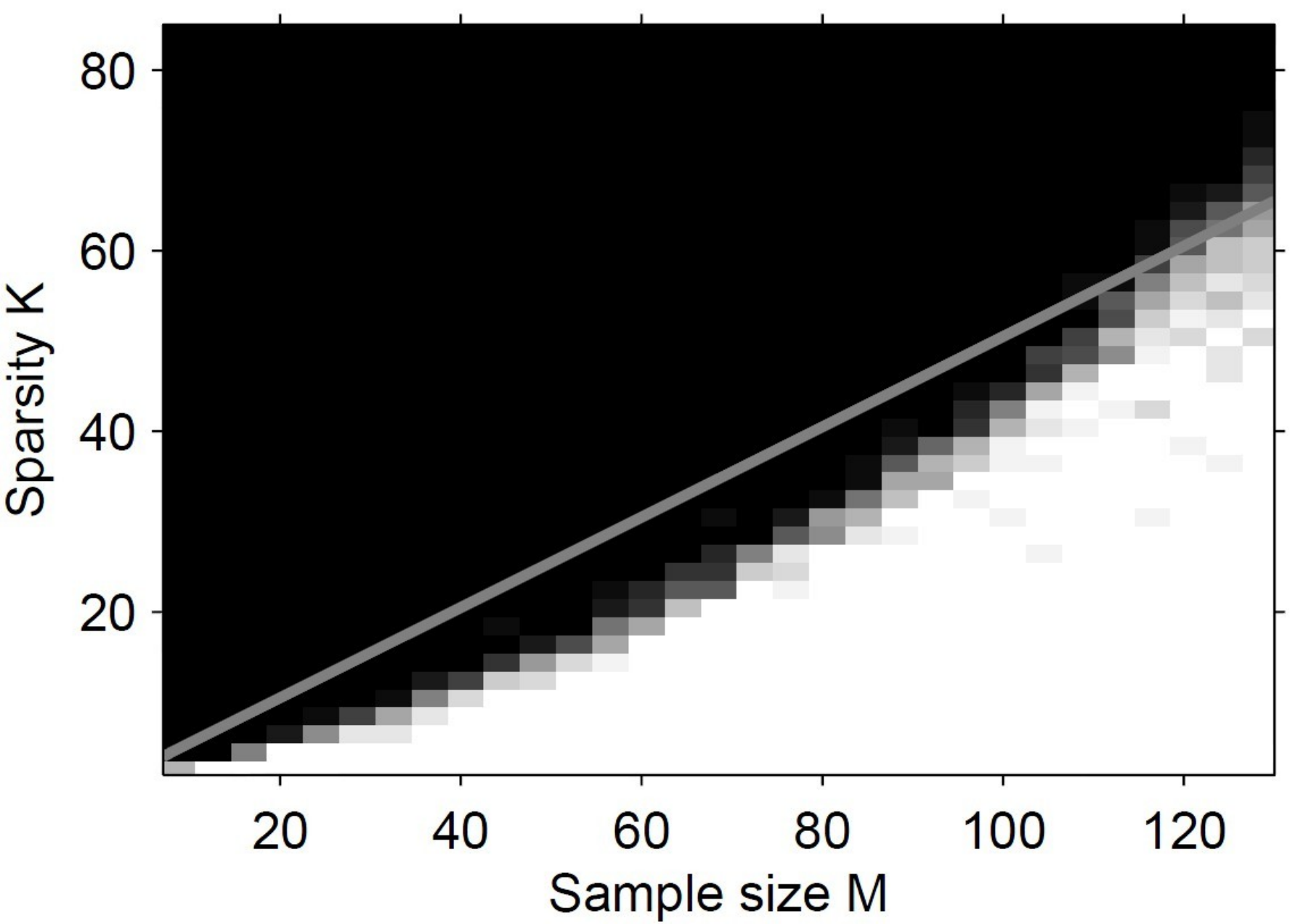}}
\centering
\caption{Phase transition results in the compressive data case with $N=128$ and $\Delta_{\text{min}}=1.2N^{-1}$. White means complete success and black means complete failure. The straight lines are $K=\frac{1}{2}\sbra{M+L}$ in (a)-(c) and $K=\frac{1}{2}\sbra{M+1}$ in (d).} \label{Fig:phasetransition}
\end{figure*}

The simulation results are presented in Fig. \ref{Fig:phasetransition}, where the phase transition phenomenon from perfect recovery to complete failure can be observed in each subfigure. It can be seen that more frequencies can be recovered when more samples are observed. When the correlation level of the MMVs, indicated by $r$, increases, the phase of successful recovery decreases. On the other hand, note that Fig. \ref{Fig:pt100} actually corresponds to the SMV case. By comparing Fig. \ref{Fig:pt100} and the other three subfigures, it can be seen that the frequency recovery performance can be greatly improved by taking MMVs, even in the presence of strong temporal correlations.

We also plot the line $K=\frac{1}{2}\sbra{M+L}$ in Figs. \ref{Fig:pt00}-\ref{Fig:pt90} and the line $K=\frac{1}{2}\sbra{M+1}$ in Fig. \ref{Fig:pt100} (straight gray lines) which are upper bounds of the sufficient condition in Theorem \ref{thm:AL0_guanrantee} for the atomic $\ell_0$ norm minimization (note that $\spark\sbra{\cA^1_{\m{\Omega}}}\leq M+1$). It can be seen that successful recoveries can be obtained even above these lines, indicating good performance of the proposed ANM method. It requires about 13s on average to solve one problem and almost 200 hours in total to generate the whole data set used in Fig. \ref{Fig:phasetransition}.

\subsection{The Noisy Case}

\begin{figure*}
\centering
  \subfigure[$L=1$]{
    \label{Fig:noisycase1}
    \includegraphics[width=2.2in]{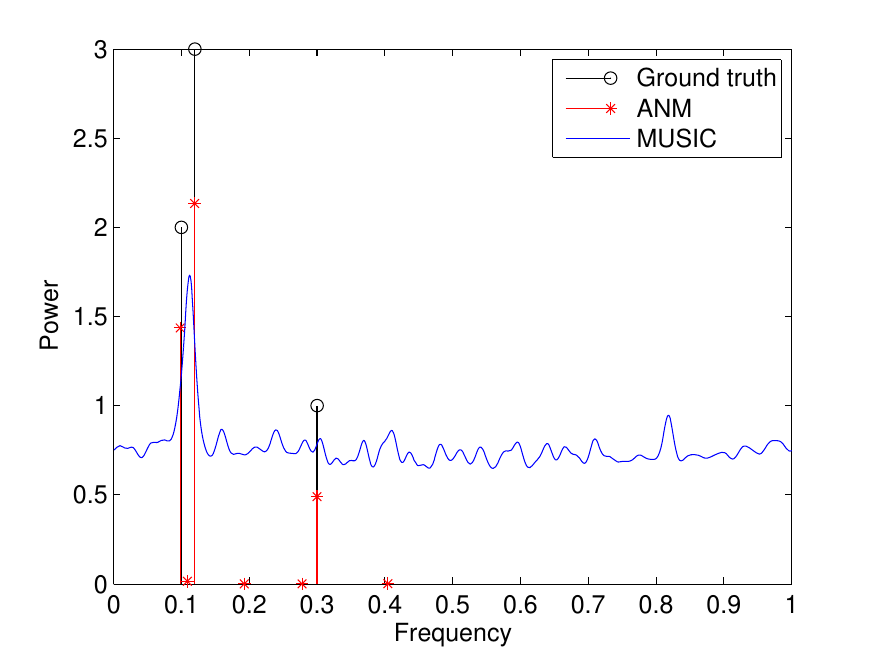}} %
  \subfigure[$L=5$, uncorrelated sources]{
    \label{Fig:noisycase2}
    \includegraphics[width=2.2in]{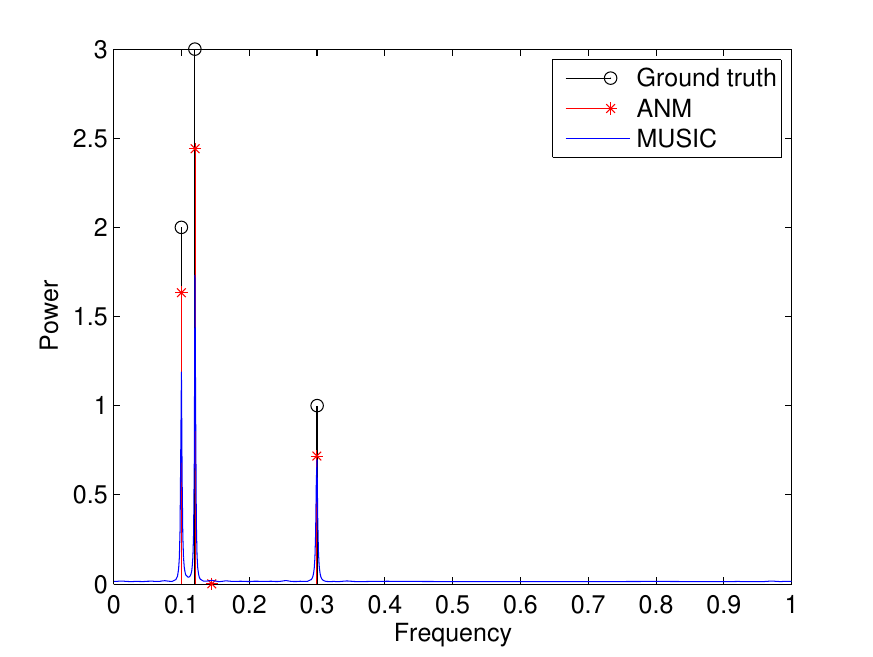}} %
  \subfigure[$L=5$, sources 1 and 3 are coherent]{
    \label{Fig:noisycase3}
    \includegraphics[width=2.2in]{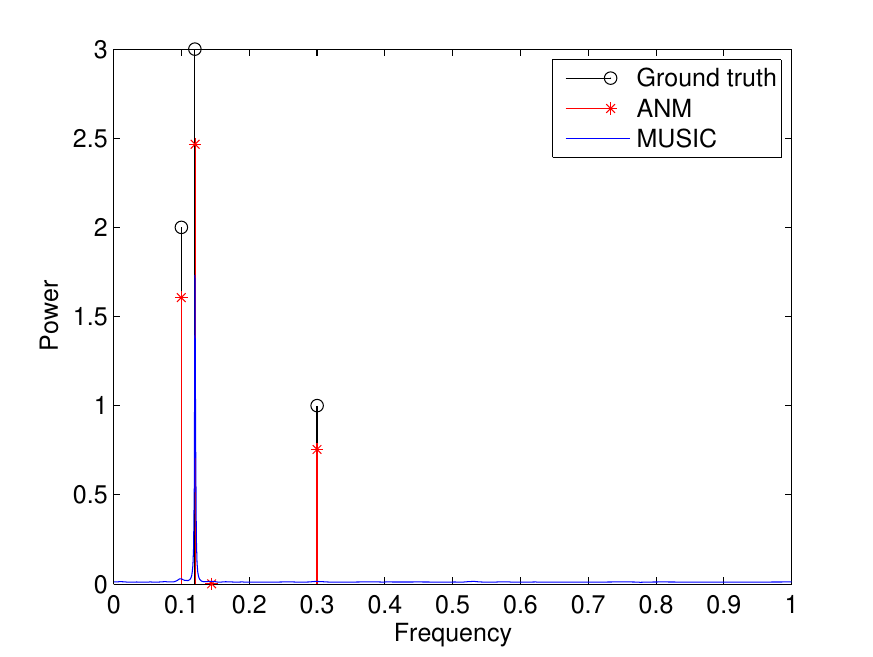}}
\centering
\caption{Frequency recovery/estimation results of ANM and MUSIC in the presence of noise, with (a) $L=1$, (b) $L=5$ and  uncorrelated sources, and (c) $L=5$ and coherent sources.} \label{Fig:noisycase}
\end{figure*}

While this paper has been focused on the noiseless case, we provide a simple simulation to illustrate the performance of the proposed method in the practical noisy case. We consider $N=50$, $M=20$ with $\m{\Omega}$ randomly generated, $K=3$ sources with frequencies of $0.1$, $0.12$ and $0.3$ and powers of 2, 3 and 1 respectively, and $L=5$. The source signals of each source are generated with constant amplitude and random phases. Complex white Gaussian noise is added to the measurements with noise variance $\sigma^2=0.1$. We propose to denoise the observed noisy signal $\m{Y}^o_{\m{\Omega}}$ and recover the frequency components by solving the following optimization problem:
\equ{\min_{\m{Y}} \norm{\m{Y}}_{\cA}, \st \frobn{\m{Y}_{\m{\Omega}}-\m{Y}^o_{\m{\Omega}}}^2\leq\eta^2, }
where $\eta^2$, set to $\sbra{ML+2\sqrt{ML}}\sigma^2$ (mean + twice standard deviation), bounds the noise energy from above with large probability. The spectral MUSIC method is also considered for comparison. Note that MUSIC estimates the frequencies from the sample covariance, while the proposed ANM method carries out covariance fitting by exploiting its structures. While the proposed method requires the noise level, MUSIC needs the source number $K$.

Typical simulation results of one Monte Carlo run are presented in Fig. \ref{Fig:noisycase}. The SMV case is studied in Fig. \ref{Fig:noisycase1} where only the first measurement vector is used for frequency recovery. It is shown that the three frequency components are correctly identified using the ANM method while MUSIC fails. The MMV case is studied in Fig. \ref{Fig:noisycase2} with uncorrelated sources, where both ANM and MUSIC succeed to identify the three frequency components. The case of coherent sources is presented in Fig. \ref{Fig:noisycase3}, where source 3 in Fig. \ref{Fig:noisycase2} is modified such that it is coherent with source 1. MUSIC fails to detect the two coherent sources as expected while the proposed method still performs well. It is shown in all the three subfigures that spurious frequency components can be present using the ANM method. But their powers are low. To be specific, the spurious components have about $0.4\%$ of the total powers in Fig. \ref{Fig:noisycase1}, and this number is on the order of $10^{-6}$ in the latter two subfigures. While these numerical results imply that the proposed method is robust to noise, a theoretical analysis will be investigated in future studies. The proposed method needs about $1.5$s in each scenario.

\section{Conclusion} \label{sec:conclusion}
In this paper we studied the JSFR problem by exploiting the joint sparsity in the MMVs. We proposed an atomic $\ell_0$ norm approach and showed the advantage of MMVs. We also proposed an atomic norm approach that can be efficiently solved by semidefinite programming and studied its theoretical guarantees for frequency recovery. These results extend the existing ones either from the SMV to the MMV case or from the discrete to the continuous frequency setting. We also discussed the connections between the proposed approaches and conventional subspace methods as well as the recent grid-based and gridless sparse techniques. Though the worst case analysis we provided for the atomic norm approach does not indicate performance gains in the presence of MMVs, simulation results indeed imply that when the source signals are located at general positions the number of required measurements can be reduced and/or the frequency separation condition can be relaxed. This average case analysis should be investigated in future studies under stronger assumptions.

\section*{Acknowledgement}
Z. Yang would like to thank Dr. Gongguo Tang of Colorado School of Mines, USA for helpful discussions.

%\bibliographystyle{IEEEtran}
%\bibliography{J:/myreferences1}
%\bibliography{C:/Users/yangz/Dropbox/myreferences1}
%\bibliography{C:/Users/zxyz/Dropbox/myreferences1}

% Generated by IEEEtran.bst, version: 1.13 (2008/09/30)

\end{document}